\newtheorem{proposition}{Proposition}
\newtheorem{result}{Result}
\newtheorem{hypo}{Hypothesis}
\newtheorem{definition}{Definition}
\newcommand{\RR}{\mathbb{R}}
\newcommand{\EE}{\mathbb{E}}
\begin{document}
\title{The Strength of Absent Ties:\\
 Social Integration via Online Dating}
\author{Philipp Hergovich and Josu\'e Ortega\footnote{Corresponding author: \href{mailto:ortega@zew.de}{ortega@zew.de} (\Letter).} \thanks{Hergovich: University of Vienna. Ortega: Center for European Economic Research (ZEW). We are particularly indebted to Dilip Ravindran for his many valuable comments. We also acknowledge helpful written feedback from So Yoon Ahn, 	Andrew Clausen, Melvyn Coles, Karol Mazur, David Meyer, Patrick Harless, Misha Lavrov, Franz Ostrizek, Yasin Ozcan, Gina Potarca, Reuben Thomas, MSE ``quasi'' and audiences at the Universities of Columbia and Essex, the Paris School of Economics, the Coalition Theory Network Workshop in Glasgow and the NOeg meeting in Vienna. Ortega acknowledges the hospitality of Columbia University. Any errors are our own. The corresponding code and data is freely available at \href{www.josueortega.com}{www.josueortega.com}. We have no conflict of interest nor external funding to disclose.}}

\date{\small First version: September 29, 2017. Revised: \today.}

\maketitle

\begin{abstract}
We used to marry people to whom we were somehow connected. Since we were more connected to people similar to us, we were also likely to marry someone from our own race. 

However, online dating has changed this pattern; people who meet online tend to be complete strangers. We investigate the effects of those previously absent ties on the diversity of modern societies. 

We find that social integration occurs rapidly when a society benefits from new connections. Our analysis of state-level data on interracial marriage and broadband adoption (proxy for online dating) suggests that this integration process is significant and ongoing.
\end{abstract}

\noindent {\small {\sc KEYWORDS}: social integration, interracial marriage, online dating, matching in networks, random graphs.}

\noindent {\small {\sc JEL Codes}: J12, D85, C78.}

\thispagestyle{empty}
\pagebreak

\section{Introduction}
In the most cited article on social networks,\footnote{\href{http://blogs.lse.ac.uk/impactofsocialsciences/2016/5/12/what-are-the-most-cited-publications-in-the-social-sciences-according-to-google-scholar/}{``What are the most-cited publications in the social sciences according to Google?''}, {\it LSE Blog}, 12/05/2016.} \citet{granovetter1973} argued that the most important connections we have may not be with our close friends but our acquaintances: people who are not very close to us, either physically or emotionally, help us to relate to groups that we otherwise would not be linked to. For example, it is from acquaintances that we are more likely to hear about job offers \citep{rees1966, corcoran1980, granovetter1995}. Those weak ties serve as bridges between our group of close friends and other clustered groups, hence allowing us to connect to the global community in a number of ways.\footnote{Although most people find a job via weak ties, it is also the case that weak ties are more numerous. However, the individual value from an additional strong tie is larger than the one from an additional weak tie \citep{kramarz2014,gee2017}.} 

Interestingly, the process of how we meet our romantic partners in at least the last hundred years closely resembles this phenomenon. We would probably not marry our best friends, but we are likely to end up marrying a friend of a friend or someone we coincided with in the past. \citet{rosenfeld2012} show how Americans met their partners in recent decades, listed by importance: through mutual friends, in bars, at work, in educational institutions, at church, through their families, or because they became neighbors. This is nothing but the weak ties phenomenon in action.\footnote{\citet{backstrom2014} reinforce the previous point: given the social network of a Facebook user who is in a romantic relationship, the node which has the highest chances of being his romantic partner is, perhaps surprisingly, not the one who has most friends in common with him.}\footnote{Similarly, most couples in Germany met through friends (32\%), at the workplace or school (21\%), and bars and other leisure venues (20\%) \citep{potarca2017}.}

But in the last two decades, the way we meet our romantic partners has changed dramatically. Rosenfeld and Thomas argue that {\it``the Internet increasingly allows Americans to meet and form relationships with perfect strangers, that is, people with whom they had no previous social tie''}. To this end, they document that in the last decade online dating\footnote{We use the term online dating to refer to any romantic relationship that starts online, including but not limited to dating apps. We use this terminology throughout the text.} has become the second most popular way to meet a spouse for Americans (see Figure \ref{fig:rosenfeld}).
\begin{figure}[!htbp]
	\centering
	\includegraphics[width=.97\textwidth]{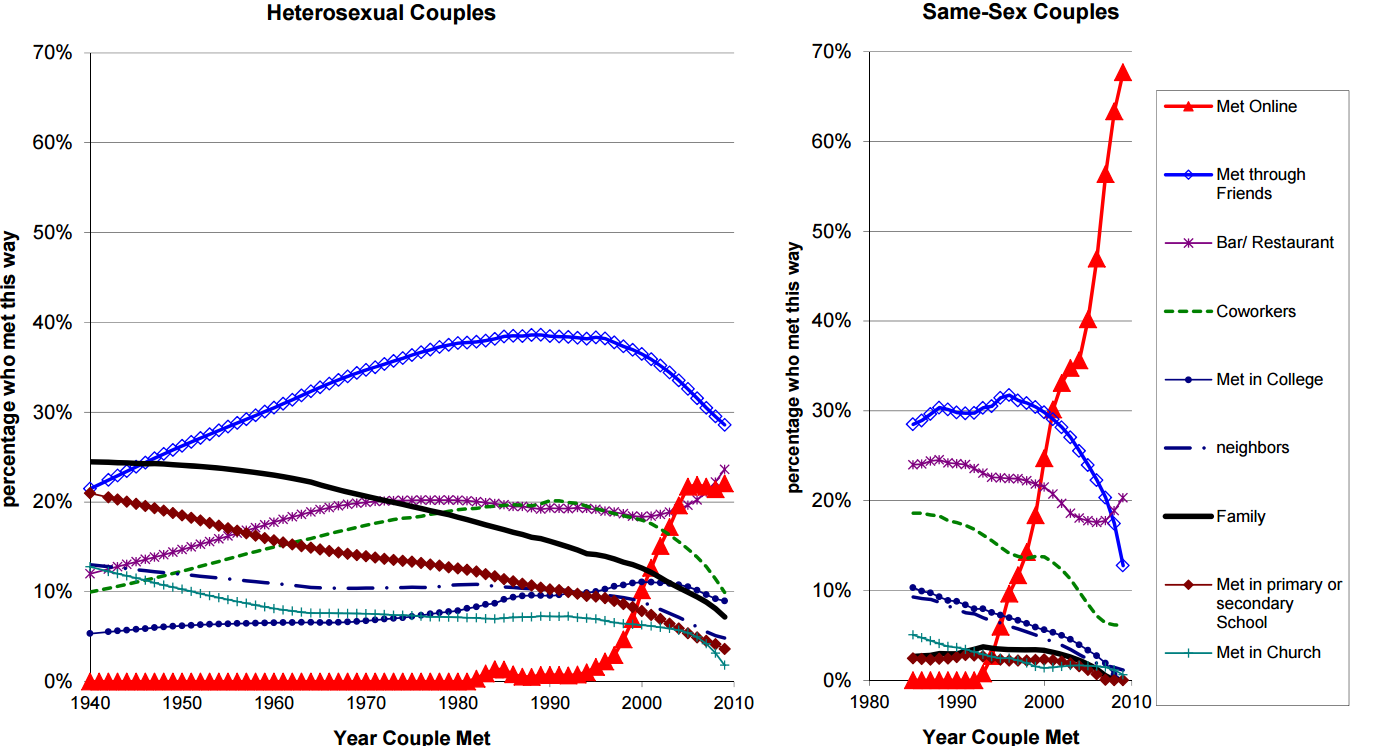}
	\caption{How we met our partners in previous decades.}
	\label{fig:rosenfeld}
	\begin{tablenotes}[Source] \centering
		\item \footnotesize Source: Rosenfeld and Thomas, 2012.
	\end{tablenotes}
\end{figure}		

Online dating has changed the way people meet their partners not only in America but in many places around the world. As an example, Figure \ref{fig:josuefriends} shows one of the author's Facebook friends graph. The yellow triangles reveal previous relationships that started in offline venues. It can clearly be seen that those ex-partners had several mutual friends with the author. In contrast, nodes appearing as red stars represent partners he met through online dating. These individuals have no contacts in common with him, and thus it is likely that, if it were not for online dating, they would have never interacted with him.\footnote{Although admittedly some of those links may have created after dating, what is clear is that the author was somewhat connected to these agents beforehand by some mutual connections, i.e. Granovetter's weak ties.}
\begin{figure}[!htbp]
	\centering
	\includegraphics[width=.97\textwidth]{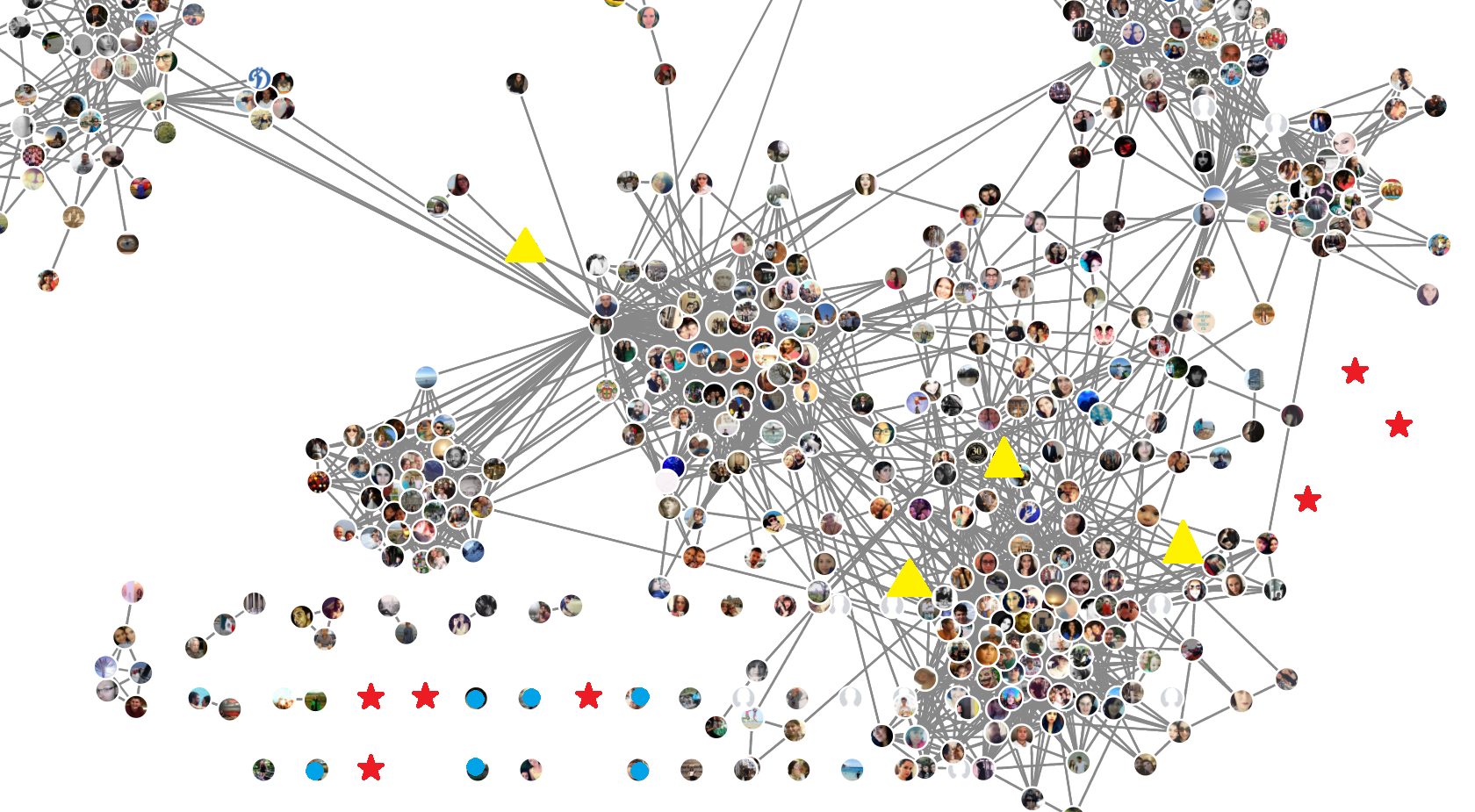}
	\caption{How one of us met his partners in the last decade.}
	\label{fig:josuefriends}
		\begin{tablenotes} \centering
			\item \footnotesize	Triangles are partners met offline, whereas starts are partners met online.
					\end{tablenotes}
\end{figure}	

Because a third of modern marriages start online \citep{cacioppo2013}, and up to 70\% of homosexual relationships, the way we match online with potential partners shapes the demography of our communities, in particular its racial diversity. Meeting people outside our social network online can intuitively increase the number of interracial marriages in our societies, which is remarkably low. Only 6.3\%  and 9\% of the total number of marriages are interracial in the US and the UK, respectively.\footnote{\href{http://www.pewresearch.org/fact-tank/2015/6/12/interracial-marriage-who-is-marrying-out/}{``Interracial marriage: Who is marrying out''}, {\it Pew Research Center}, 12/6/2015; and \href{http://webarchive.nationalarchives.gov.uk/20160107132103/http://www.ons.gov.uk/ons/dcp171776_369571.pdf}{``What does the 2011 census tell us about inter-ethnic relationships?''}, {\it UK Office for National Statistics}, 3/7/2014.} The low rates of interracial marriage are expected, given that up until 50 years ago these were illegal in many parts of the US, until the Supreme Court outlawed anti-miscegenation laws in the famous {\it Loving vs. Virginia} case.\footnote{Interracial marriage in the US has increased since 1970, but it remains rare \citep{arrow1998,kalmijn1998,fryer2007,furtado2015,chiappori2016black}. It occurs far less frequently than interfaith marriages \citep{qian1997}.}
	
This paper aims at improving our understanding of the impact of online dating on racial diversity in modern societies. In particular, we intend to find out how many more interracial marriages, if any, occur after online dating becomes available in a society. In addition, we are also interested in whether marriages created online are any different from those that existed before.

Understanding the evolution of interracial marriage is an important problem, for intermarriage is widely considered a measure of social distance in our societies \citep{wong2003,fryer2007,furtado2015}, just like residential or school segregation. Moreover, the number of interracial marriages in a society has important economic implications. It increases the social network of both spouses who intermarry by connecting them to people of different race. These valuable connections translate into a higher chance of finding employment \citep{furtado2010}.\footnote{There is a large literature that analyzes the effect of marrying an immigrant. This literature is relevant because often immigrants are from different races than natives. This literature has consistently found that an immigrant who married a native often has a higher probability of finding employment \citep{meng2005,furtado2010,goel2009}. Marrying a native increases the probability of employment, but not the perceived salary \citep{kantarevic2004}.} This partially explains why the combined income of an White-Asian modern couple is 14.4\% higher than than the combined income of an Asian-Asian couple, and 18.3\% higher than a White-White couple \citep{pew2012}. Even when controlling for factors that may influence the decision to intermarry, \cite{gius2013} finds that all interracial couples not involving African Americans have higher combined incomes than a White-White couple.\footnote{In some cases, intermarriage may even be correlated with poor economic outcomes. Examining the population in Hawaii, \cite{fu2007} finds that White people are 65\% more likely to live in poverty if they marry outside their own race.}  

Interracial marriage also affects the offspring of couples who engage in it. \cite{duncan2011} find that children of an interracial marriage between a Mexican Latino and an interracial partner enjoy significant human capital advantages over children born from endogamous Mexican marriages in the US.\footnote{Although Hispanic is not a race, Hispanics do not associate with ``standard'' races. In the 2010 US census, over 19 million Latinos identified themselves as being of ``some other race''. See \href{http://www.nytimes.com/2012/01/14/us/for-many-latinos-race-is-more-culture-than-color.html}{``For many Latinos, racial identity is more culture than color''}, {\it New York Times}, 13/1/2012.}
 Those human capital advantages include a 50\% reduction in the high school dropout rate for male children.\footnote{
\cite{pearce2012} examines the behavioral well-being of children in inter and intraracial households. They find no significant differences between the two groups.}

\subsection{Overview of Results}
This article builds a novel theoretical framework to study matching problems under network constraints. Our model is different to the previous theoretical literature on marriage in that we explicitly study the role of social networks in the decision of whom to marry. Consequently, our model provides new testable predictions regarding how changes in the structure of agents' social networks impact the number of interracial marriages and the quality of marriage itself. In particular, our model combines non-transferable utility matching  {\it \`a la} \citet{gale1962} with random graphs, first studied by \cite{gilbert1959} and \cite{erdos1959}, which we use to represent social networks.\footnote{Most of the literature studying marriage with matching models uses transferable utility, following the seminal work of \citet{becker1973,becker1981}. A review of that literature appears in \cite{browning2014}. Although our model departs substantially from this literature, we point out similarities with particular papers in Section \ref{sec:model}.}

We consider a society composed of agents who belong to different races. All agents want to marry the potential partner who is closest to them in terms of personality traits, but they can only marry people who they know, i.e. to whom they are connected. As in real life, agents are highly connected with agents of their own race, but only poorly so with people from other races.\footnote{The average American public school student has less than one school friend of another race \citep{fryer2007}. Among White Americans, 91\% of people comprising their social networks are also White, while for Black and Latino Americans the percentages are 83\% and 64\%, respectively \citep{values2}. \cite{patacchini2016} document that 84\% of the friends of white American students are also white. For high school students, less than 10\% of interracial friendships exist \citep{shrum1988}. Furthermore, only 8\% of Americans have anyone of another race with whom they discuss important matters \citep{marsden1987}.} Again inspired by empirical evidence \citep{hitsch2010,banerjee2013}, we assume that the marriages that occur in our society are those predicted by game-theoretic stability, i.e. no two unmarried persons can marry and make one better off without making the other worse off. In our model, there is a unique stable marriage in each society (Proposition \ref{prop:uniqueness}).

After computing the unique stable matching, we introduce online dating in our societies by creating previously absent ties between races and compute the stable marriage again.\footnote{We obtain the same qualitative results if we increase both interracial and intraracial connections, because the marginal value of interracial connections is much larger; see Appendix \ref{app:robust}. On a related note, although some dating websites allow the users to sort partners' suggestions based on ethnicity, many of them suggest partners randomly. For our main result, we only need that online daters meet at one partner outside their social circle. Rosenfeld and Thomas (2012) suggest that this is indeed the case.} We compare how many more interracial marriages are formed in the new expanded society that is more interracially connected. We also keep an eye on the characteristics of those newly formed marriages. In particular, we focus on the average distance in personality traits between partners before and after the introduction of online dating, which we use as a proxy for the strength of marriages in a society (ideally, all agents marry someone who has the same personality traits as them).

Perhaps surprisingly, we find that making a society more interracially connected may decrease the number of interracial marriages. It also may increase the average distance between spouses, and even lead to less married people in the society (Proposition \ref{prop:monotonicity}). However, this only occurs in rare cases. Our main result affirms that the expected number of interracial marriages in a society increases rapidly after new connections between races are added (Result \ref{res:diversity}). In particular, if we allow marriage between agents who have a friend in common, complete social integration occurs when the probability of being directly connected to another race is $\frac{1}{n}$, where $n$ is the number of persons in each race. This result provides us with our first and main testable hypothesis: social integration occurs rapidly after the emergence of online dating, even if the number of partners that individuals meet from newly formed ties is small. The increase in the number of interracial marriages in our model does not require changes in agents' preferences.

Furthermore, the average distance between married couples becomes smaller, leading to better marriages in which agents obtain more desirable partners on average (Result \ref{res:strength}). This second result provides another testable hypothesis: marriages created in a society with online dating last longer and report higher levels of satisfaction than those created offline. We find this hypothesis interesting, as it has been widely suggested that online dating creates relationships of a lower quality.\footnote{\href{https://nypost.com/2015/08/16/tinder-is-tearing-apart-society/}{``Tinder is tearing society apart''}, {\it New York Post}, 16/08/2015; and \href{https://www.theguardian.com/commentisfree/2011/jul/25/online-dating-love-product}{``Online dating is eroding humanity''}, {\it The Guardian}, 25/07/11.} Finally, the added connections in general increase the number of married couples whenever communities are not fully connected or are unbalanced in their gender ratio (Result \ref{res:size}). This result provides a third and final testable hypothesis: the emergence of online dating leads to more marriages.

We contrast the testable hypotheses generated by the model with US data. With regards to the first and main hypothesis, we find that the number of interracial marriages substantially increases after the popularization of online dating. This increase in interracial marriage cannot be explained by changes in the demographic composition of the US only, because Black Americans are the racial group whose rate of interracial marriage has increased the most, going from 5\% in 1980 to 18\% in 2015 \citep{pew2017}. However, the fraction of the US population that is Black has remained relatively constant during the last 50 years at around 12\% of the population \citep{pew2015}.

To properly identify the impact of online dating on the generation of new interracial marriages, we exploit sharp temporal
and geographic variation in the pattern of broadband adoption, which we use as a proxy for the introduction of online dating. This strategy is sensible given that broadband adoption has limited correlation to other factors impacting interracial marriages and eliminates the possibility of reverse causation. Using this data from 2000 to 2016, we conclude that one additional line of broadband internet 3 years ago (marriages take time) affects the probability of being in an interracial marriage by 0.07\%. We obtain this effect by estimating a linear probability model that includes a rich set of individual- and state-level controls, including the racial diversity of each state and many others. Therefore, we conclude that there is evidence suggesting that online dating is causing more interracial marriages, and that this change is ongoing.

Moreover, shortly after we first made available our paper online on September of 2017, \cite{thomas2018} used recently collected  data on how couples meet to successfully demonstrate that couples that met online are more likely to be interracial, even when controlling for the diversity of their corresponding locations. Thomas estimates that American couples who met online since 1996 are 6\% to 7\% more likely to be interracial than those who met offline. His findings further establish that online dating has indeed had a positive impact on the number of interracial marriages, as predicted by our model.

With respect to the quality of marriages created online, both \cite{cacioppo2013} and \cite{rosenfeld2017} find that relationships created online last at least as long as those created offline, defying the popular belief that marriages that start online are of lower quality than those that start elsewhere, and are in line with our second prediction (in fact, \citealp{cacioppo2013} finds that marriages that start online last longer and report a higher marital satisfaction).\footnote{Because online dating is a recent phenomenon, it is unclear whether these effects will persist in the long run. However, the fact that independent studies find similar effects suggests that these findings are robust. \cite{rosenfeld2017} also finds that couples who meet online marry faster than those created offline. }

Finally, with respect to our third hypothesis that asserts that online dating should increase the number of married couples, \citet{bellou2015} finds causal evidence that online dating has increased the rate at which both White and Black young adults marry in the US. The data she analyzes shows that online dating has contributed to higher marriage rates by up to 33\% compared to the counterfactual without internet dating. Therefore, our third prediction is consistent with Bellou's findings.

\subsection{Structure of the Article}
We present our model in Section \ref{sec:model}. Section \ref{sec:measures} introduces the welfare indicators underlying the further analysis. Sections \ref{sec:theory} and \ref{sec:simulations} analyze how these measures change when societies become more connected using theoretical analysis and simulations, respectively. Section \ref{sec:data} contrasts our model predictions with observed demographic trends from the US. Section \ref{sec:conclusion} concludes.

\section{Model}
\label{sec:model}
\subsection{Agents}
There are $r$ races or communities, each with $n$ agents (also called nodes). Each agent $i$ is identified by a pair of coordinates $(x_i,y_i) \in [0,1]^2$, that can be understood as their {\it personality traits} (e.g. education, political views, weight, height, etcetera).\footnote{For a real-life representation using a 2-dimensional plane see {\tt www.politicalcompass.org}. A similar interpretation appears in \citet{chiappori2012} and in \citet{chiappori2016}. We use two personality traits because it allows us to use an illustrative and pedagogic graphical representation. All the results are robust to adding more personality traits.} Both coordinates are drawn uniformly and independently for all agents. Each agent is either male or female. Female agents are plotted as stars and males as dots. Each race has an equal number of males and females, and is assigned a particular color in our graphical representations.

\subsection{Edges}
Between any two agents of the same race, there exists a connecting edge (also called link) with probability $p$: these edges are represented as solid lines and occur independently of each other. Agents are connected to others of different race with probability $q$: these interracial edges appear as dotted lines and are also independent. The intuition of our model is that two agents know each other if they are connected by an edge.\footnote{This interpretation is common in the study of friendship networks, see \cite{demarti2016} and references therein. Our model can be understood as the islands model in \cite{golub2012}, in which agents' type is both their race and gender.} We assume that $p>q$.  We present an illustrative example in Figure \ref{fig:dots}. 

\begin{figure}[!htbp]
\centering
	\includegraphics[width=1\textwidth]{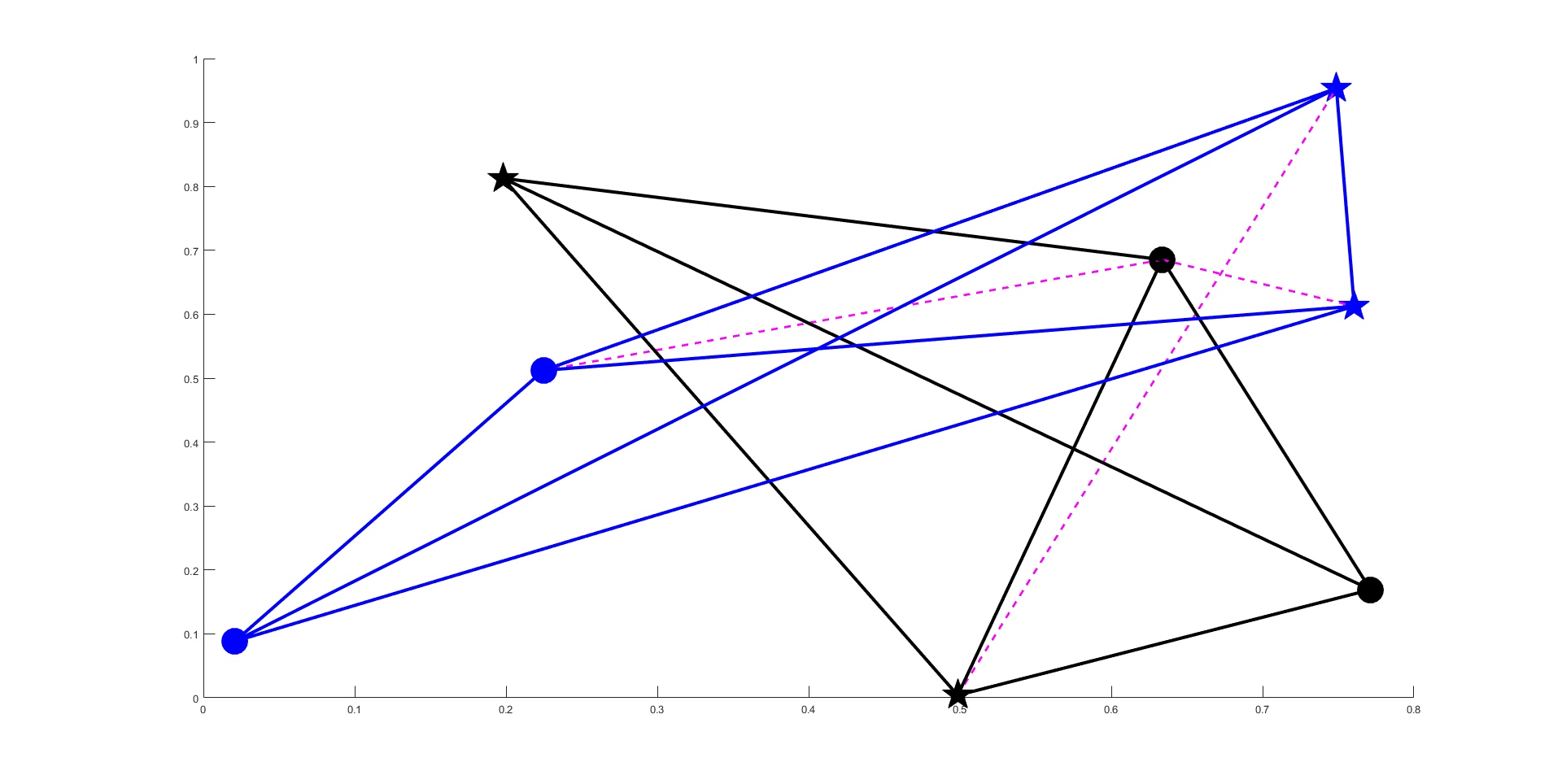}
	\caption{Example of a society with $n=4$ agents, $r=2$ races, $p=1$ and $q=0.2$.}
	\label{fig:dots}
\end{figure}	

Our model is a generalization of the random graph model (\citealp{erdos1959, gilbert1959}; for a textbook reference, see \citealp{bollobas2001}). Each race is represented by a random graph with $n$ nodes connected among them with probability $p$. Nodes are connected across graphs with probability $q$. The $r$ random graphs are the within-race set of links for each race. In expectation, each agent is connected to $n(r-1)q+(n-1)p$ persons.

A {\it society} $S$ is a realization from a generalized random graph model, defined by a four-tuple ($n,r,p,q$). A society $S$ has a corresponding graph $S=(M \cup W;E)$, where $M$ and $W$ are the set of men and women, respectively, and $E$ is the set of edges. We use the notation $E(i,j)=1$ if there is an edge between agents $i$ and $j$, and 0 otherwise. We denote such edge by either $(i,j)$ or $(j,i)$.  

\subsection{Agents' Preferences}
All agents are heterosexual and prefer marrying anyone over remaining alone.\footnote{Both assumptions are innocuous and for exposition only.} We denote by $P_i$ the set of potential partners for $i$, i.e. those of different gender. The preferences of agent $i$ are given by a function $\delta_i:P_i \to \RR_+$ that has a distance interpretation.\footnote{The function $\delta$ can be generalized to include functions that violate the symmetry $(\delta(x,y)\neq \delta(y,x))$ and identity ($\delta(x,x)=0$) properties of mathematical distances.} An agent $i$ prefers agent $j \in P_i$ over agent $k \in P_i$ if $\delta_i(i,j) \leq \delta_i(i,k)$. The intuition is that agents like potential partners that are close to them in terms of personality traits. The function $\delta_i$ could take many forms, however we put emphasis on two intuitive ones.

The first one is the Euclidean distance for all agents, so that for any agent $i$ and every potential partner $j \neq i$, 
\begin{equation}
\delta^E(i,j)= \sqrt{(x_i - x_j)^2 + (y_i - y_j)^2}
\end{equation}

and $\delta^E(i,i)=\sqrt 2$ $\forall i \in M \cup W$, i.e. the utility of remaining alone equals the utility derived from marrying the worst possible partner. Euclidean preferences are intuitive and have been widely used in the social sciences \citep{bogomolnaia2007}. The indifference curves associated with Euclidean preferences can be described by concentric circles around each node. 

The second preferences we consider are such that every agent prefers a partner close to them in personality trait $x$, but they all agree on the optimum value in personality trait $y$. The intuition is that the $y$-coordinate indicates an attribute that is usually considered desirable by all partners, such as wealth. We call these preferences {\it assortative}.\footnote{If we keep the $x$-axis fixed, so that agents only care about the $y$-axis, we get full assortative mating as a particular case. The preferences for the $y$ attribute are also known as vertical preferences.} Formally, for any agent $i$ and every potential partner $j \in P_i$, 
\begin{equation}
\delta^A(i,j)= \left\vert x_i - x_j \right\vert+(1-y_j)
\end{equation}

and $\delta^E(i,i)=2$ $\forall i \in M \cup W$. The $\delta$ functions we discussed can be weighted to account for the strong intraracial preferences that are often observed in reality \citep{wong2003,fisman2008,hitsch2010,rudder2014,potarca2015,mcgrath2016}.\footnote{It is not clear whether the declared intraracial preferences show an intrinsic intraracial predilection or capture external biases, which, when removed, leave the partner indifferent to match across races. Evidence supporting the latter hypothesis includes: \citet{fryer2007} documents that White and Black US veterans have had higher rates of intermarriage after serving with mixed communities. \citet{fisman2008} finds that people do not find partners of their own race more attractive. \citet{rudder2009} shows that online daters have a roughly equal user compatibility. \citet{lewis2013} finds that users are more willing to engage in interracial dating if they previously interacted with a dater from another race.} Inter or intraracial preferences can be incorporated into the model, as in equation (\ref{eq:weights}) below
\begin{equation}
\delta'_i(i,j)=\beta_{ij} \; \delta(i,j)
\label{eq:weights}
\end{equation}

where $\beta_{ij}=\beta_{ik}$ if agents $j$ and $k$ have the same race, and $\beta_{ij} \neq \beta_{ik}$ otherwise. In equation (\ref{eq:weights}), the factor $\beta_{ij}$ captures weightings in preferences. The case $\beta_{ij}<1$ implies that agent $i$ relative prefers potential partners of the same race as agent $j$, while $\beta_{ij}>1$ expresses relative dislike towards potential partners of the same race as agent $j$. Although our results are qualitatively the same when we explicitly incorporate racial preferences using the functional form in equation (\ref{eq:weights}), we postpone this analysis to Appendix \ref{app:robust}. 

A society in which all agents have either all Euclidean or all assortative preferences will be called Euclidean or assortative, respectively. We focus on these two cases. In both cases agents' preferences are strict because we assume personality traits are drawn from a continuous distribution.

\subsection{Marriages}
Agents can only marry potential partners who they know, i.e. if there exists a path of length at most $k$ between them in the society graph.\footnote{A path from node $i$ to $t$ is a set of edges $(ij),(jk),\ldots,(st)$. The length of the path is the number of such pairs.} We consider two types of marriages:
\begin{enumerate}[topsep=0pt,itemsep=-1ex,partopsep=1ex,parsep=1ex]
	\item Direct marriages: $k=1$. Agents can only marry if they know each other. 
	\item Long marriages: $k=2$. Agents can only marry if they know each other or if they have a mutual friend in common.
\end{enumerate}

To formalize the previous marriage notion, let $\rho_k(i,j)=1$ if there is a path of at most length $k$ between $i$ and $j$, with the convention $\rho_1(i,i)=1$. A marriage $\mu:M \cup W \to M \cup W$ of length $k$ is a function that satisfies
\begin{eqnarray}
\forall m\in M && \mu(m) \in W \cup \{m\} \\
\forall w\in W && \mu(w) \in M \cup \{w\} \\
\forall i \in M \cup W && \mu( \mu(i) )=i\\
\forall i \in M \cup W && \mu( i )=j \text{ only if } \rho_k(i,j)=1
\end{eqnarray}

We use the convention that agents that remain unmarried are matched to themselves. Because realized romantic pairings are close to those predicted by stability \citep{hitsch2010, banerjee2013}, we assume that marriages that occur in each society are {\it stable}.\footnote{We study the stability of the marriages created, following the matching literature, not of the network per se. Stability of networks was defined by \cite{jackson1996} in the context of network formation. We take the network structure as given.} A marriage $\mu$ is $k$-stable if there is no man-woman pair $(m,w)$ who are not married to each other such that
\begin{eqnarray}
\label{eq:block} \rho_k(m,w)&=&1\\
\delta(m,w) &<& \delta(m,\mu(m))\\
\delta(w,m) &<& \delta(w,\mu(w))
\end{eqnarray} 

Such a pair is called a blocking pair. Condition (\ref{eq:block}) is the only non-standard one in the matching literature, and ensures that a pair of agents cannot block a direct marriage if they are not connected by a path of length at most $k$ in the corresponding graph. Given our assumptions regarding agents' preferences,

\begin{proposition}
\label{prop:uniqueness}
For any positive integer $k$, every Euclidean or assortative society has a unique $k$-stable marriage.
\end{proposition}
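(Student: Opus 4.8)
The plan is to recast the network constraint as an ordinary marriage market with incomplete preference lists, and then obtain uniqueness by ruling out the cyclic preference patterns that two distinct stable matchings would necessarily exhibit. First I would declare any partner $j$ with $\rho_k(i,j)=0$ to be \emph{unacceptable} for $i$, ranked below remaining single, so that a $k$-stable marriage is exactly a stable matching of the resulting market. Existence then follows from the deferred-acceptance algorithm of \citet{gale1962} applied to these truncated lists, and strictness of preferences (guaranteed by the continuous distribution of traits) holds throughout. It remains only to show there is at most one stable matching.

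Suppose, for contradiction, that $\mu$ and $\nu$ are two distinct $k$-stable marriages. Since every agent strictly prefers any acceptable partner to being single, the lone-wolf (Rural Hospitals) theorem gives that the set of matched agents is identical in both, so the symmetric difference of $\mu$ and $\nu$ decomposes into disjoint cycles alternating between $\mu$- and $\nu$-edges, with no open paths. Along any such cycle, written $w_1,m_1,w_2,m_2,\dots,w_t,m_t$ with $\mu(m_i)=w_i$ and $\nu(m_i)=w_{i+1}$ (indices modulo $t$), stability forces the two sexes to disagree: if some man preferred his $\nu$-partner, then that woman---who is connected to him, since they are matched in $\nu$---would have to prefer her $\mu$-partner, lest the pair block $\mu$. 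Propagating this around the cycle and relabelling so that the men favour $\mu$, I obtain the strict inequalities $\delta(m_i,w_i)<\delta(m_i,w_{i+1})$ for each man and $\delta(w_{i+1},m_i)<\delta(w_{i+1},m_{i+1})$ for each woman.

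The decisive step is to sum these inequalities around the cycle. For assortative preferences, $\delta(m,w)=|x_m-x_w|+(1-y_w)$ and $\delta(w,m)=|x_m-x_w|+(1-y_m)$, so the vertical terms $\sum_i(1-y_{w_i})$ and $\sum_i(1-y_{m_i})$ are each summed over one full cycle and therefore cancel; the men's inequalities then collapse to $\sum_i|x_{m_i}-x_{w_i}|<\sum_i|x_{m_i}-x_{w_{i+1}}|$, while the women's, using the symmetry $|x_w-x_m|=|x_m-x_w|$ and a shift of index, collapse to the reverse inequality $\sum_i|x_{m_i}-x_{w_{i+1}}|<\sum_i|x_{m_i}-x_{w_i}|$, a contradiction. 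The Euclidean case is identical and in fact simpler, as $\delta^E$ is already symmetric and carries no vertical term, so the two sums contradict one another directly. Hence no alternating cycle can exist, $\mu=\nu$, and the $k$-stable marriage is unique.

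The main obstacle is the asymmetry of assortative preferences: because $m$ and $w$ do not assess their match by one common number, the naive greedy argument that settles the symmetric Euclidean case outright---repeatedly matching the globally closest mutually optimal pair---does not apply. The resolution is the observation that the offending vertical component is additively separable and depends only on the partner, so it telescopes to zero around any cycle, leaving only the symmetric horizontal distances, which cannot sustain a cyclic strict inequality. A secondary point requiring care is the treatment of unmatched agents, handled above by invoking invariance of the matched set so that the symmetric difference consists of cycles alone.
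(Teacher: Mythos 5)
Your proof is correct, and it differs from the paper's in an instructive way, because the paper's proof is split into two unrelated arguments. For the assortative case the paper does essentially what you do: it assumes two distinct stable matchings, extracts a swap, and adds the stability inequalities so that the partner-specific terms cancel to give $0<0$; but it writes this only for a two-couple swap, relegating the general case to a footnote (``the argument readily generalizes''). Your version is that footnote made precise: the recasting as a market with truncated preference lists, the lone-wolf theorem to equalize the matched sets, the decomposition of the symmetric difference into alternating cycles, the explicit propagation of opposition of interests (where you correctly note that every potential blocking pair you invoke is matched in one of the two matchings, hence connected, so the network constraint never interferes), and the observation that the vertical terms telescope around a full cycle. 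For the Euclidean case, however, the paper takes a genuinely different route: a greedy algorithm that repeatedly marries mutually most-preferred connected pairs (in the spirit of Holroyd et al.'s Poisson matching and of top trading cycles), any stable matching being forced to agree with its output; you instead observe that the symmetry of $\delta^E$ lets the same cycle-summation argument apply verbatim, with no vertical term to cancel. Your unification buys generality and completeness --- in particular you establish existence via deferred acceptance for both societies, whereas the paper's assortative argument only shows there is \emph{at most} one stable matching and leaves existence implicit --- while the paper's algorithm buys constructiveness, which is what the authors actually exploit to compute stable marriages in their simulations.
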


\begin{proof}For the Euclidean society, a simple algorithm computes the unique $k$-stable marriage. Let every person point to their preferred partner to whom they are connected to by a path of length at most $k$. In case two people point to each other, marry them and remove them from the graph. Let everybody point to their new preferred partner to which they are connected to among those still left. Again, marry those that choose each other, and repeat the procedure until no mutual pointing occurs. The procedure ends after at most $\frac{rn}{2}$ iterations. This algorithm is similar to the one proposed by \citet{holroyd2009} for 1-stable matchings in the mathematics literature\footnote{\cite{holroyd2009} require two additional properties: non-equidistance and no descending chains. The first one is equivalent to strict preferences, the second one is trivially satisfied. In their algorithm, agents point to the closest agent, independently if they are connected to them.} and to David Gale's top trading cycles algorithm (in which agents' endowments are themselves), used in one-sided matching with endowments \citep{shapley1974} .

For the assortative society, assume by contradiction that there are two $k$-stable matchings $\mu$ and $\mu'$ such that for two men $m_1$ and $m_2$, and two women $w_1$ and $w_2$, $\mu(m_1)=w_1$ and $\mu(m_2)=w_2$, but $\mu'(m_1)=w_2$ and $\mu'(m_2)=w_1$.\footnote{It could be the case that in the two matchings there are no four people who change partner, but that the swap involves more agents. The argument readily generalizes.} The fact that both marriages are $k$-stable implies, without loss of generality, that for $i,j \in \{1,2\}$ and $i \neq j$, $\delta(m_i,w_i)-\delta(m_i,w_j)<0$ and $\delta(w_i,m_j)-\delta(w_i,m_i)<0$. Adding up those four inequalities, one obtains $0<0$, a contradiction. $\blacksquare$
\end{proof}
 
The fact that the stable marriage is unique allows us to unambiguously compare the characteristics of marriages in two different societies.\footnote{In general, the set of stable marriages is large. Under different restrictions on agents' preferences we also obtain uniqueness \citep{eeckhout2000, clark2006}. None of the restrictions mentioned in those papers applies the current setting.} Figure \ref{fig:marriages} illustrates the direct and long stable marriages for the Euclidean and assortative societies depicted in Figure \ref{fig:dots}. Marriages are represented as red thick edges.  

\begin{figure*}[t]
    \centering
        \begin{subfigure}[b]{0.485\textwidth}
            \centering
            \includegraphics[width=\textwidth]{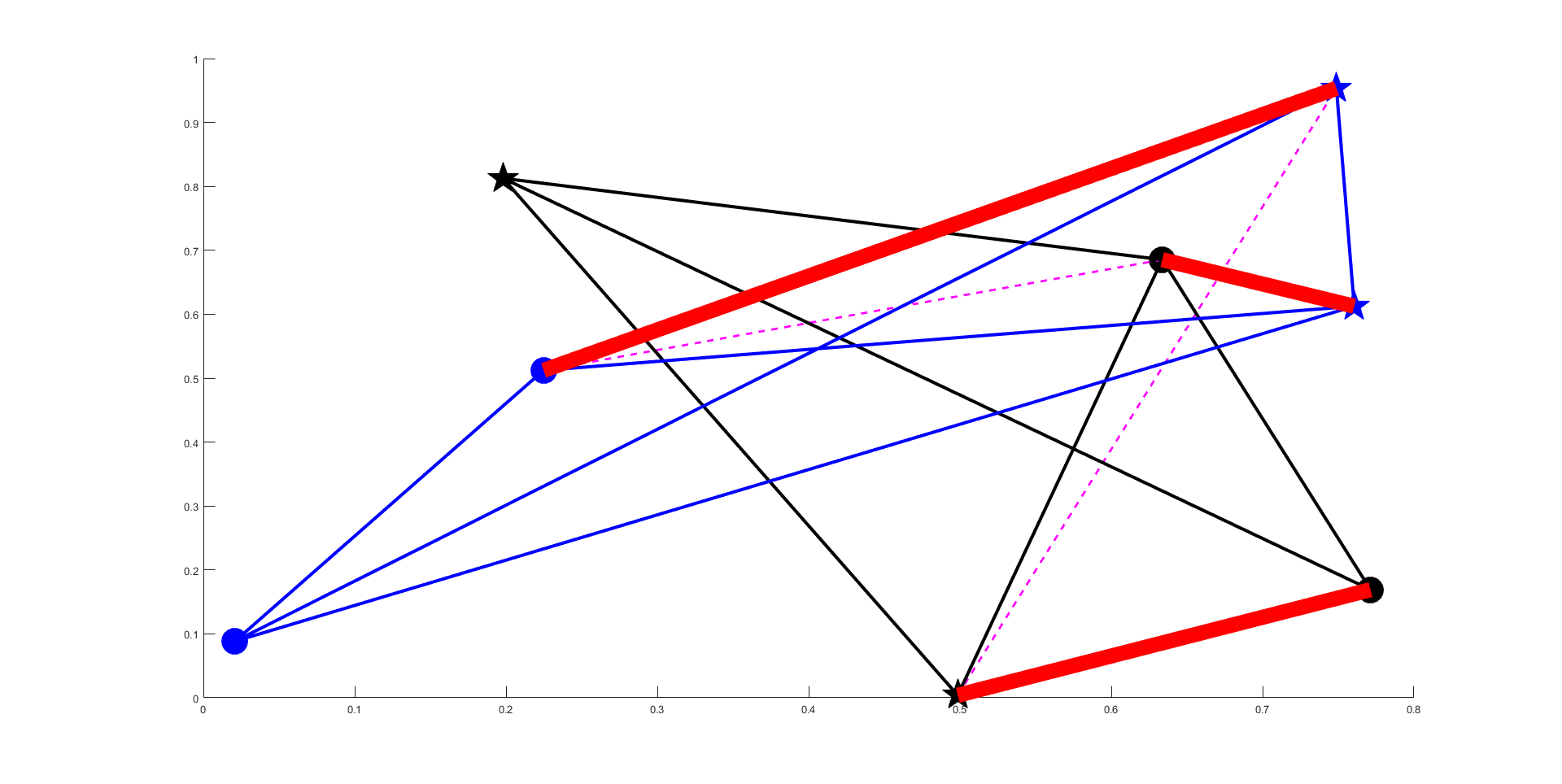}
             \caption{Direct marriage, Euclidean preferences.}
							\label{fig:w}
        \end{subfigure}
        \hfill
        \begin{subfigure}[b]{0.485\textwidth}  
            \centering 
            \includegraphics[width=\textwidth]{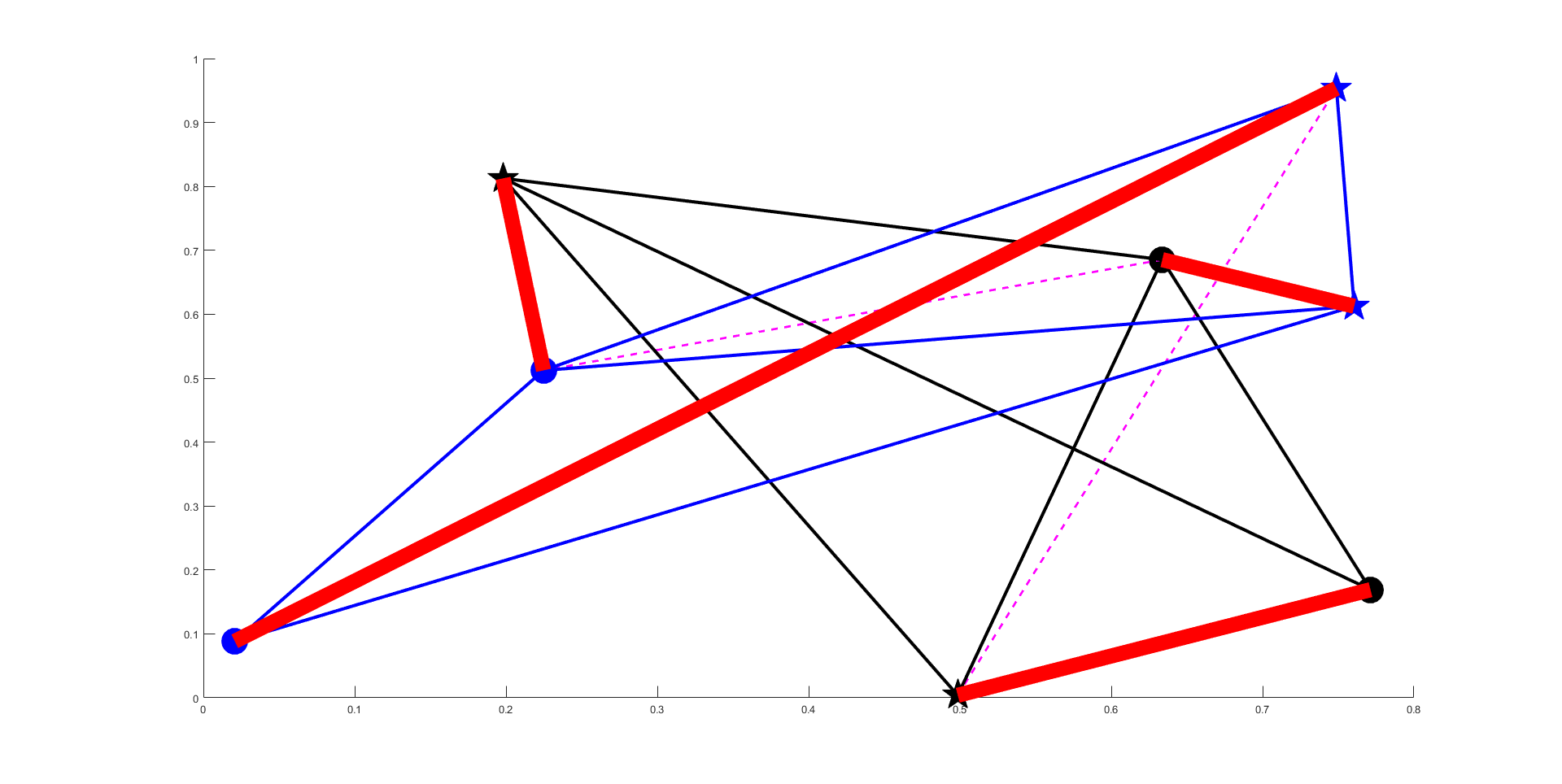}
             \caption{Long marriage, Euclidean preferences.}
							\label{fig:w1}
        \end{subfigure}
        \vskip\baselineskip
        \begin{subfigure}[b]{0.485\textwidth}   
            \centering 
            \includegraphics[width=\textwidth]{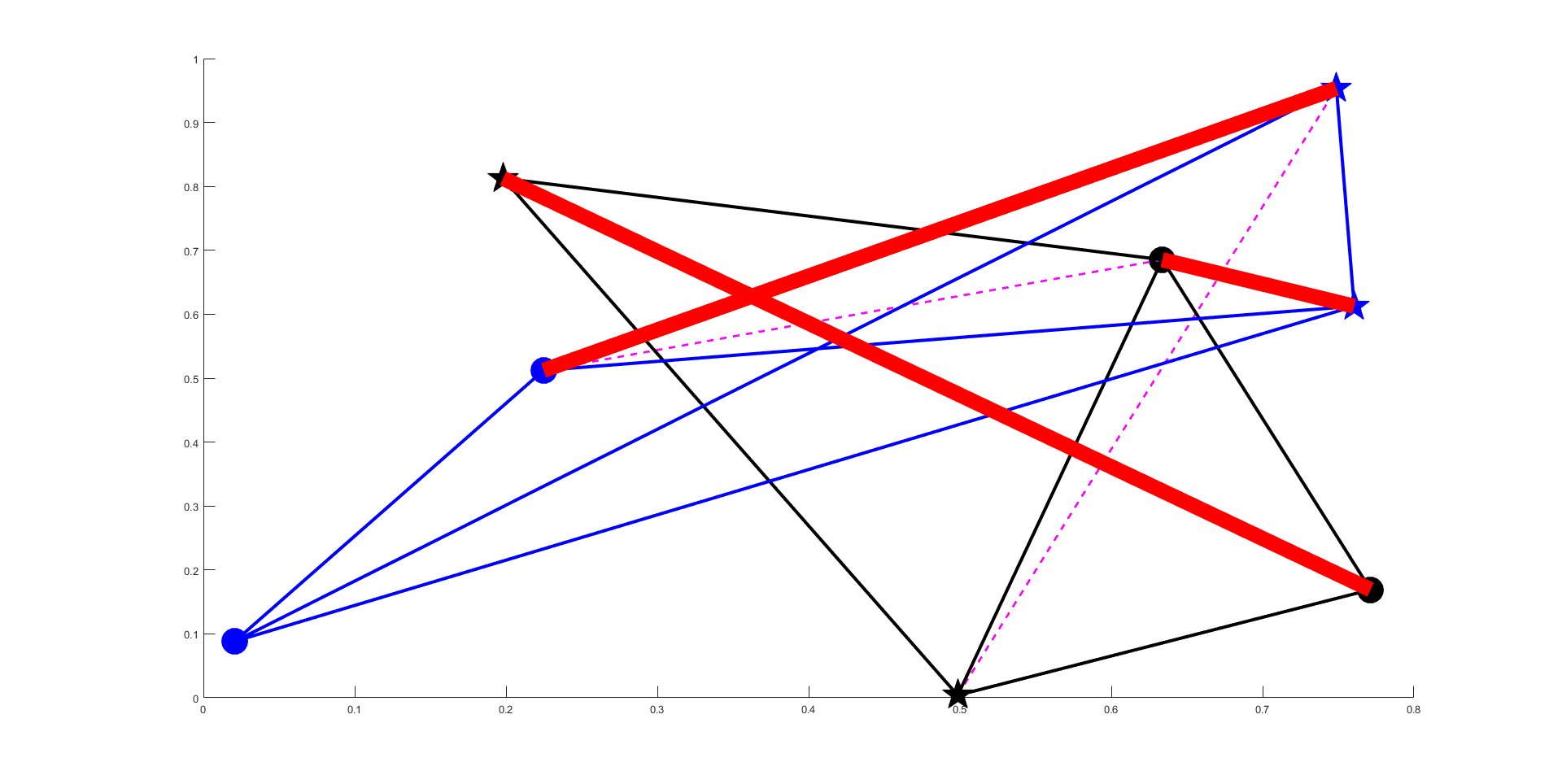}
             \caption{Direct marriage, assortative preferences.}
        \label{fig:s}
        \end{subfigure}
        \hfill
        \begin{subfigure}[b]{0.485\textwidth}   
            \centering 
            \includegraphics[width=\textwidth]{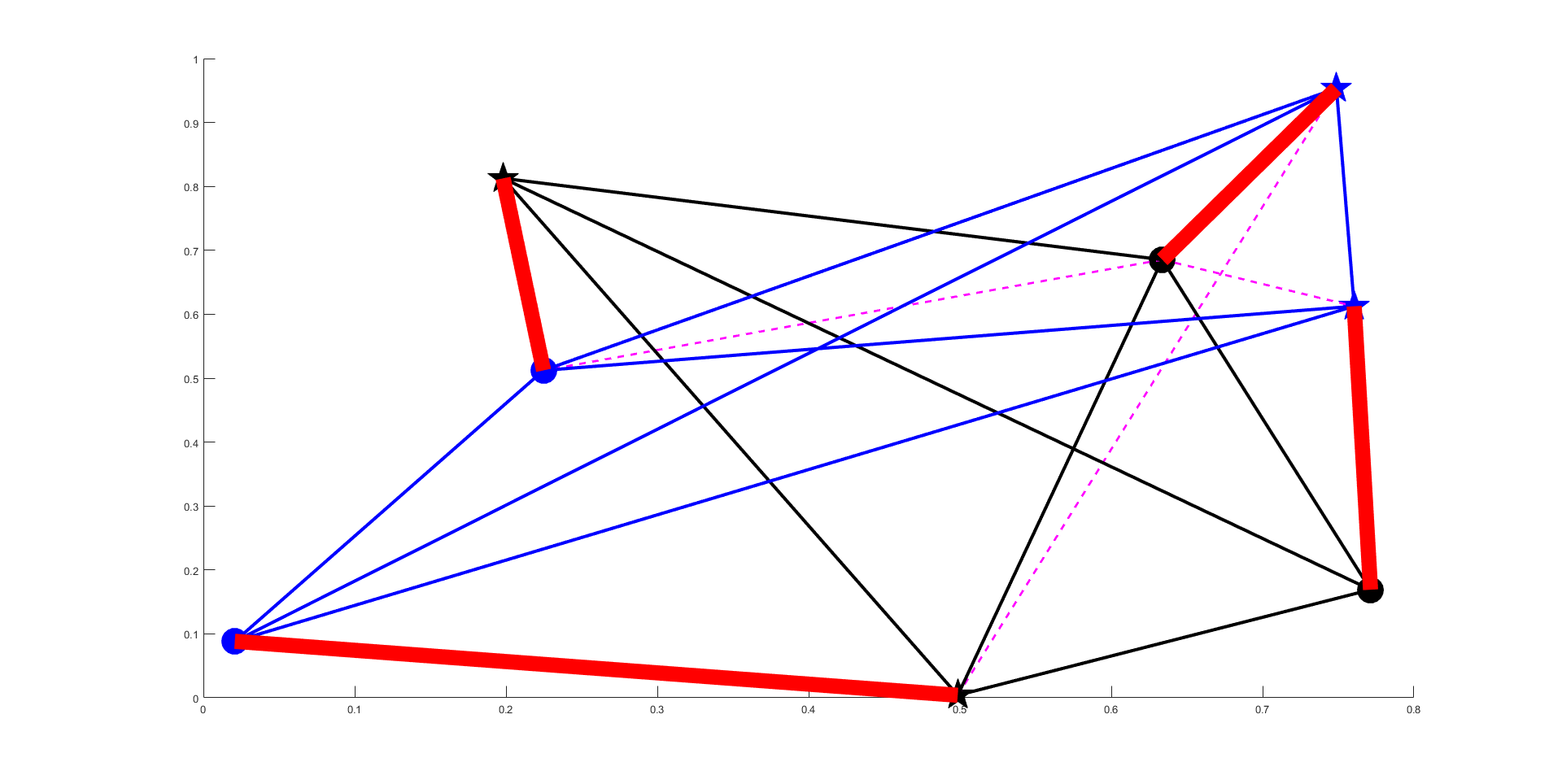}
             \caption{Long marriage, assortative preferences.}
        \label{fig:s2}
        \end{subfigure}
    \caption{Direct and long stable marriages for the Example in Figure \ref{fig:dots}.}
		\label{fig:marriages}
\end{figure*}

\subsection{Online Dating on Networks and Expansions of Societies}
We model online dating in a society $S$ by increasing the number of interracial edges. Given the graph $S=(M \cup W;E)$, we create new interracial edges between every pair that is disconnected with a probability $\epsilon$.\footnote{Online dating is likely to also increase the number of edges inside each race, but since we assume that $p>q$, these new edges play almost no role. We perform robustness checks in Appendix \ref{app:robust}, increasing both $p$ and $q$ but keeping its ratio fixed.}$\textcolor{blue}{^,}$\footnote{We could assume that particular persons are more likely than others to use online dating, e.g. younger people. However, the percentage of people who use online dating has increased for people of all ages. See: \href{http://www.pewresearch.org/fact-tank/2016/02/29/5-facts-about-online-dating/}{``5 facts about online dating''}, {\it Pew Research Center}, 29/2/2016. To obtain our main result, we only need a small increase in the probability of interconnection for each agent.} $S_\epsilon$ denotes a society that results after online dating has occurred in society $S$. $S_\epsilon$ has exactly the same nodes as $S$, and all its edges, but potentially more. We say that the society $S_\epsilon$ is an {\it expansion} of the society $S$. Equivalently, we model online dating by increasing $q$. Online dating generates a society drawn from a generalized random graph model with a higher $q$, i.e. with parameters ($n,r,p,q+ \epsilon$).

\section{Welfare Indicators}
\label{sec:measures}
We want to understand how the welfare of a society changes after online dating becomes available, i.e. after a society becomes more interracially connected. We consider three welfare indicators:

\noindent
\begin{enumerate}[leftmargin=0cm,itemindent=.5cm,labelwidth=\itemindent,labelsep=0cm,align=left]
	\item {\bf Diversity}, i.e. how many marriages are interracial. We normalize this indicator so that 0 indicates a society with no interracial marriages, and 1 equals the diversity of a colorblind society in which $p=q$, where an expected fraction $\frac{r-1}{r}$ of the marriages are interracial. Formally, let $\mathcal R$ be a function that maps each agent to their race and $M^*$ be the set of married men. Then
\begin{equation}
	dv(S)= \frac{\left\vert \{ m \in M^* : \mathcal R(m) \neq \mathcal R (\mu(m)) \} \right\vert }{  \left\vert M^* \right\vert } \cdot \frac{r}{r-1}
\end{equation}

	\item {\bf Strength}, defined as $\sqrt{2}$ minus the average Euclidean distance between each married couple. This number is normalized to be between 0 and 1. If every agent gets her perfect match, strength is 1, but if every agent marries the worst possible partner, strength equals 0. We believe strength is a good measure of the quality of marriage not only because it measures how much agents like their spouses, but also because a marriage with a small distance between spouses is less susceptible to break up when random agents appear. Formally
\begin{equation}
	st(S)=\frac{\sqrt{2} -  \frac{\sum_{m \in M^*} \delta^E  ( m,\mu(m))}{\left\vert M^* \right\vert} }{\sqrt{2}}
\end{equation}

	\item {\bf Size}, i.e. the ratio of the society that is married. Formally,
	\begin{equation}
	sz(S)=\frac{\left\vert M^* \right\vert}{n}
	\label{eq:size}
	\end{equation}
\end{enumerate}

\section{Edge Monotonicity of Welfare Indicators}
\label{sec:theory}
Given a society $S$, the first question we ask is whether the welfare indicators of a society always increase when its number of interracial edges grow, i.e. when online dating becomes available. We refer to this property as {\it edge monotonicity}.\footnote{Properties that are edge monotonic have been thoroughly studied in the graph theory literature \citep{erdos1995}. Edge monotonicity is different from node monotonicity, in which one node, with all its corresponding edges, is added to the matching problem. It is well-known that when a new man joins a stable matching problem, every woman weakly improves, while every man becomes weakly worse off (Theorems 2.25 and 2.26 in \citealp{roth1992}).} 

\begin{definition}A welfare measure $w$ is edge monotonic if, for any society $S$, and any of its extensions $S_\epsilon$, we have
\begin{equation}
w(S_\epsilon) \geq w(S) 
\end{equation}
\end{definition}

That a welfare measure is edge monotonic implies that a society unambiguously becomes better off after becoming more interracially connected. Unfortunately,
\begin{proposition}
\label{prop:monotonicity}
	Diversity, strength, and size are all not edge monotonic.
\end{proposition}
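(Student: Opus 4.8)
The plan is to prove all three non-monotonicity claims by exhibiting explicit counterexamples: for each indicator I would display a society $S$ together with an expansion $S_\epsilon$ obtained by adding a single interracial edge, and show that the indicator strictly decreases. Because Proposition \ref{prop:uniqueness} guarantees a unique stable marriage in every Euclidean society, the value of each indicator is well defined in $S$ and in $S_\epsilon$, so the comparison is unambiguous. I would work throughout with $r=2$ and $k=1$ (direct marriages), place all relevant agents on a horizontal line so that Euclidean distance reduces to the gap in the $x$-coordinate, and pad each race with isolated nodes to satisfy the exact $(n,r,p,q)$ structure; isolated nodes remain single and affect none of the three indicators. In each case the stable matchings before and after are read off from the pointing procedure in the proof of Proposition \ref{prop:uniqueness}.

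The common engine is that the added interracial edge creates a potential blocking pair whose two endpoints prefer one another to their current spouses; this forces a reshuffle whose collateral effects I can steer by choosing coordinates. For \emph{size}, I would start from two intraracial marriages, a red man with a red woman and a blue man with a blue woman, and add an interracial edge between the red man and the blue woman, placed so close that each prefers the other to their current partner. They marry, but the two displaced agents are of different races, share no edge, and have no other connection, so both become single and the number of married couples falls from two to one. For \emph{strength} I would reuse essentially the same configuration, but now let the two displaced agents be joined by a pre-existing, very distant interracial edge; after the new close couple forms, the orphaned pair is forced into a far-apart marriage, and a direct computation of the two averages shows the mean spousal distance strictly increases, i.e. strength drops.

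The delicate case, and the one I expect to be the main obstacle, is \emph{diversity}, because expansions add only interracial edges and it is counterintuitive that the interracial \emph{fraction} can fall. The key is to exploit the denominator: I would arrange for the new edge to raise the number of intraracial marriages without destroying interracial ones. Concretely, starting from two interracial marriages together with one further same-race man left single for want of an available partner, I would add an interracial edge that lets one already-married man upgrade to a closer, previously isolated woman of the other race; his jilted same-race former partner is then freed and marries the waiting same-race man through a pre-existing intraracial edge. The interracial count is preserved while one intraracial marriage is gained, so the normalized fraction strictly decreases.

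The heart of the difficulty in the diversity construction is designing this two-step rejection chain so that it terminates in the intended configuration and, crucially, verifying that the resulting matching is the \emph{unique} stable one in $S_\epsilon$ rather than merely stable, which is what licenses the before/after comparison. Proposition \ref{prop:uniqueness} supplies uniqueness, but the coordinates and preference orders must be chosen so that every potential blocking pair other than the intended ones is ruled out, both before and after the edge is added. By contrast, the size and strength examples involve only a single displacement and are routine to check.
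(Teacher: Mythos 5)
Your proposal is correct and follows essentially the same strategy as the paper's proof: explicit small counterexamples for each of the three indicators, with the matchings before and after the added edge pinned down by the uniqueness/pointing argument of Proposition \ref{prop:uniqueness}; your size example (a new close interracial couple stranding two displaced agents who share no edge) and your strength example (the displaced pair forced into a pre-existing distant interracial marriage) are virtually identical to the paper's constructions. The only minor variation is in the diversity case, where you keep the interracial count fixed while generating an extra intraracial marriage (two out of two interracial marriages becoming two out of three), whereas the paper's example has the new edge destroy both original interracial marriages and produce two intraracial ones plus one new interracial one (two out of two becoming one out of three) --- but both rest on the same rejection-chain mechanism and the same exploitation of the normalized fraction.
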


\begin{proof}
We show that diversity, strength and size are not edge monotonic by providing counterexamples. To show that size is not edge monotonic, consider the society in Figure \ref{fig:dots} and its direct stable matching in Figure \ref{fig:w}. Remove all interracial edges: it is immediate that in the unique stable matching there are now four couples, one more than when interracial edges are present. 

For the case of strength, consider a simple society in which all nodes share the same $y$-coordinate, as the one depicted in Figure \ref{fig:strength}. There are two intraracial marriages and the average Euclidean distance is 0.35. When we add the interracial edge between the two central nodes, the closest nodes marry and the two far away nodes marry too. The average Euclidean distance in the expanded society increases to 0.45, hence reducing its strength.
\begin{figure}[ht]
				\centering
        \includegraphics[width=.97\textwidth]{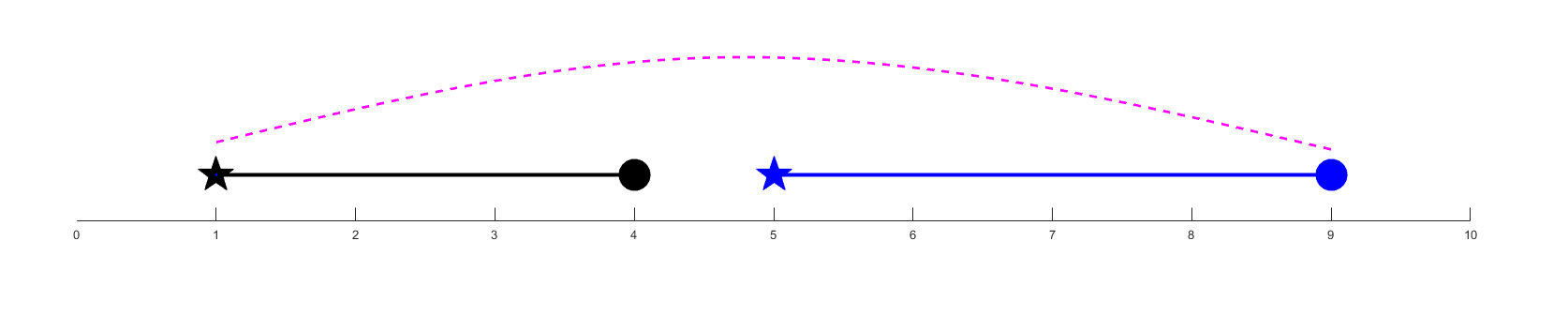}
				\caption{Strength is not edge monotonic.}
									\begin{tablenotes} \centering
		\item \footnotesize The average Euclidean distance between spouses increases after creating the interracial edge between the nodes in the center.
					\end{tablenotes}
				\label{fig:strength}
\end{figure}

To show that diversity is not edge monotonic, consider Figure \ref{fig:diversityem}. There are two men and two women of each of two races $a$ and $b$. Each gender is represented with the superscript $^+$ or $^-$.
\begin{figure}[ht]
    \centering 
				\includegraphics[width=\textwidth]{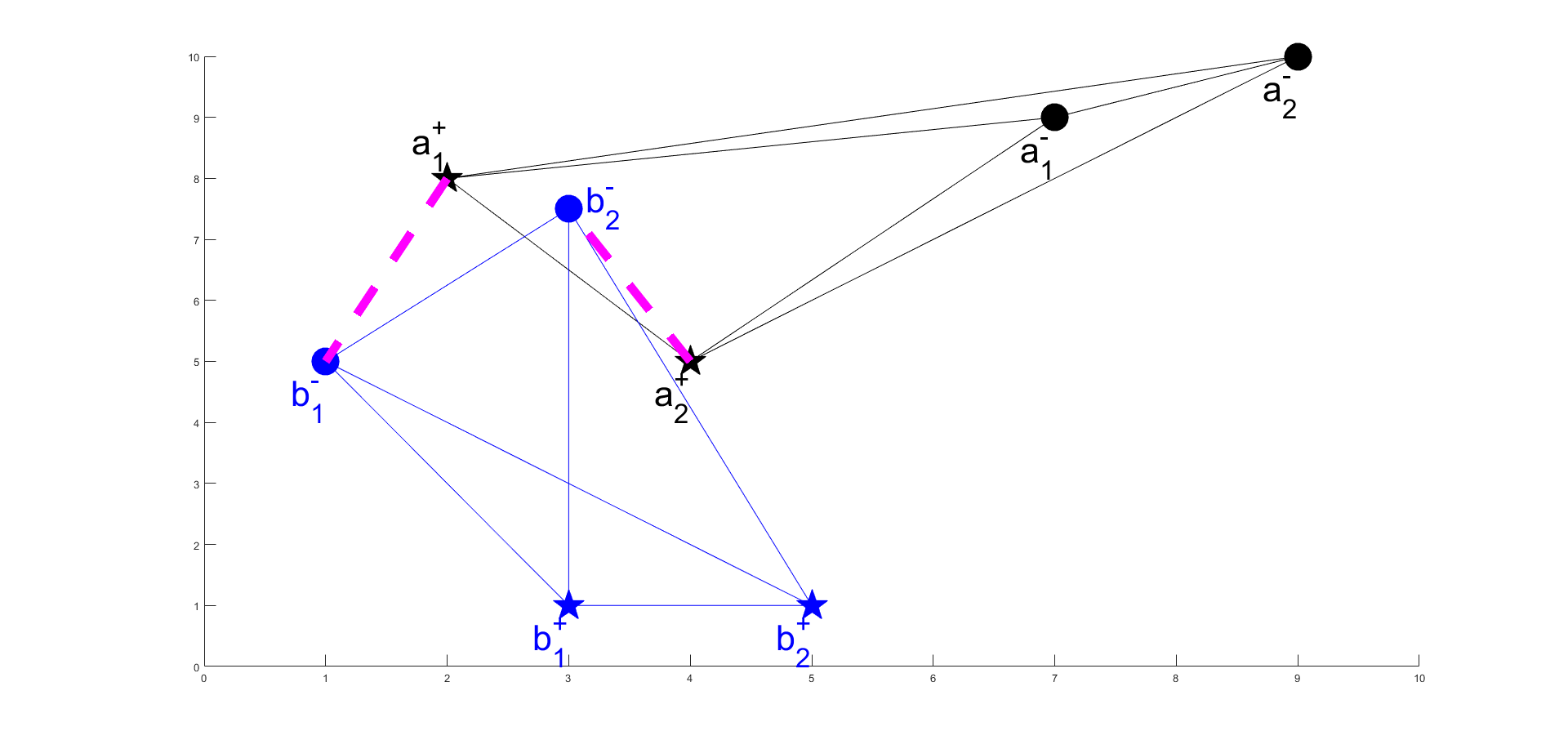}
    \begin{subfigure}[b]{0.485\textwidth}
        \includegraphics[width=\textwidth]{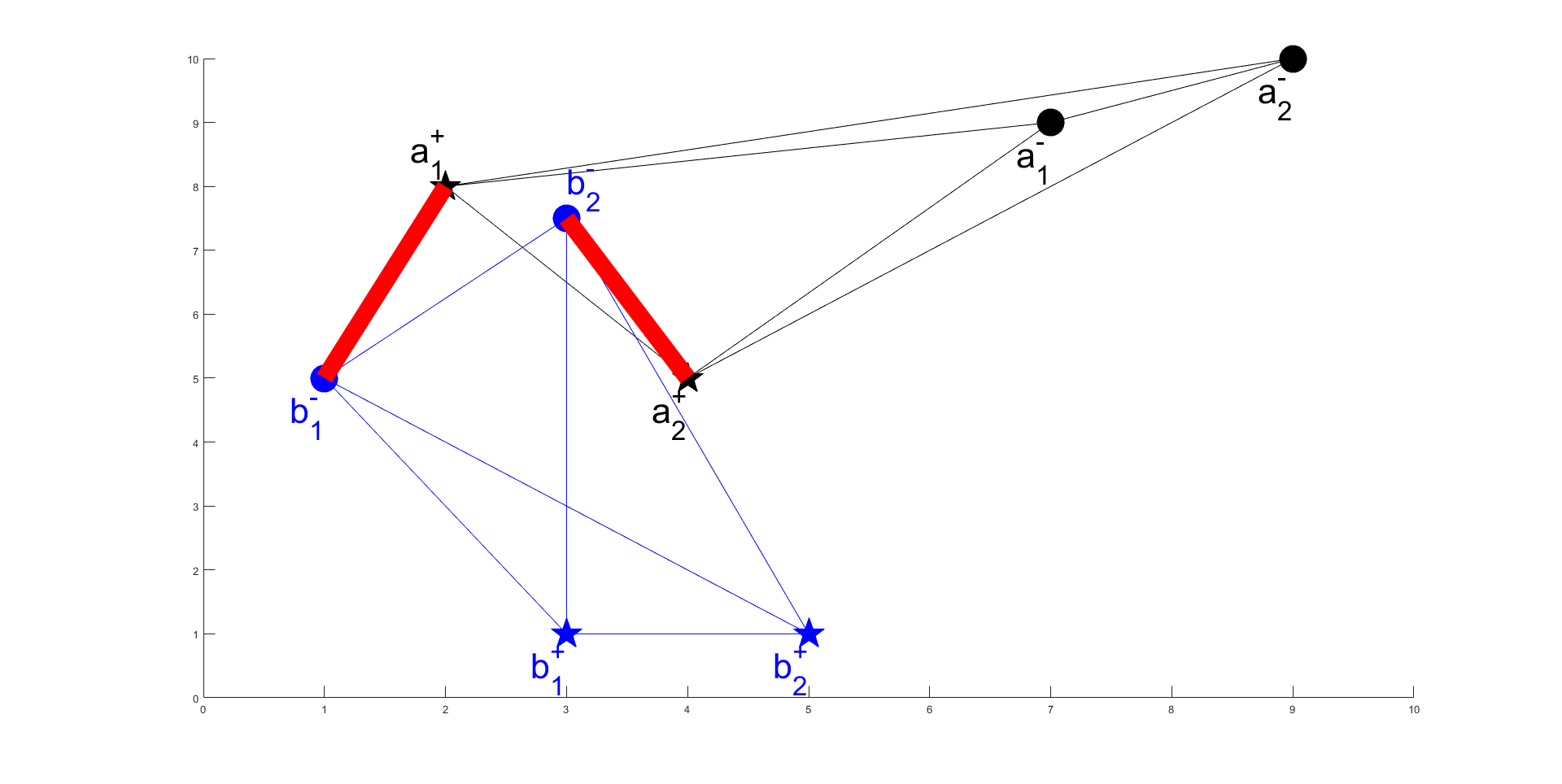}
        \caption{$dv(S)=2$}
        \label{fig:dem1}
    \end{subfigure}
    ~ 
    \begin{subfigure}[b]{0.485\textwidth}
        \includegraphics[width=\textwidth]{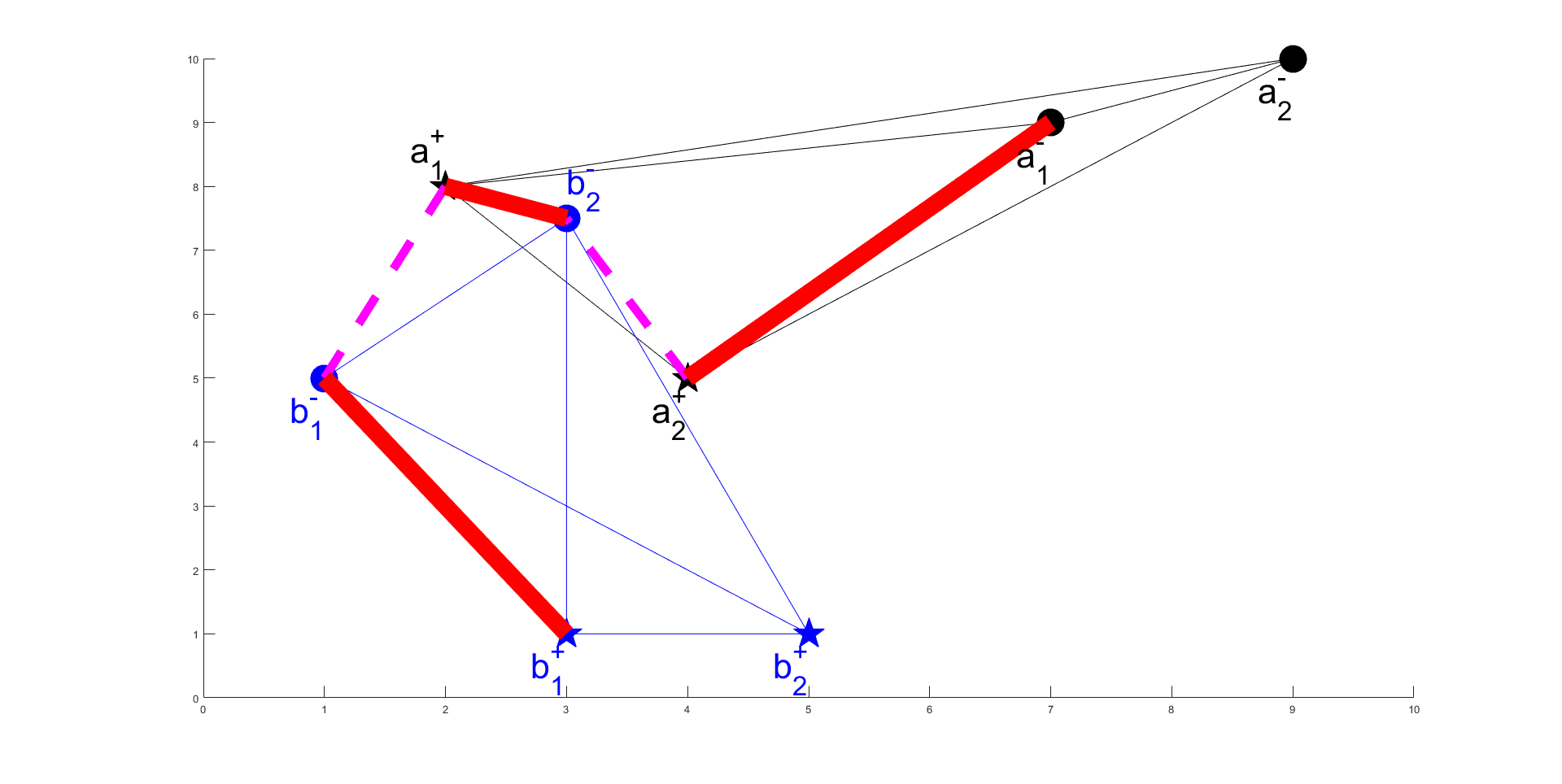}
        \caption{$dv(S_\epsilon)=2/3$}
        \label{fig:dem2}
    \end{subfigure}
				\caption{Diversity is not edge monotonic.}
					\begin{tablenotes} \centering
		\item \footnotesize The diversity of this society reduces after creating the interracial edge $(a_1^+,b_2^-)$. The top graph represents the original society. The bottom left graph show the marriages in the original society, whereas the bottom right graph shows the marriages in the expanded society.
					\end{tablenotes}
				\label{fig:diversityem}
\end{figure}

Stability requires that $\mu(b_1^-)=a_1^+$ and $\mu(b_2^+)=a_2^-$, and everyone else is unmarried. However, when we add the interracial edge $(a_1^+b_2^-)$, the married couples become $\mu(b_1^-)=b_1^+$, $\mu(a_2^+)=a_1^-$, and $\mu(a_1^+)=b_2^-$. In this extended society, there is just one interracial marriage, out of a total of three, when before we had two out of two. Therefore diversity reduces after adding the edge $(a_1^+b_2^-)$. $\blacksquare$
\end{proof}

The failure of edge monotonicity by our three welfare indicators makes evident that, to evaluate welfare changes in societies, we need to understand how welfare varies in an average society after introducing new interracial edges. We develop this comparison in the next Section.

A further comment on edge monotonicity. The fact that the size of a society is not edge monotonic implies that adding interracial edges may not lead to a Pareto improvement for the society. Some agents can become worse off after the society becomes more connected. Nevertheless, the fraction of agents that becomes worse off after adding an extra edge is never more than one-half of the society, and although it does not vanish as the societies grow large, the welfare losses measured in difference in spouse ranking become asymptotically zero. \cite{ortega2017} discusses both findings in detail.

\section{Expected Welfare Indicators}
\label{sec:simulations}

To understand how the welfare indicators behave on average, we need to form expectations of these welfare measures. We are able to evaluate this expression analytically for diversity, and rely on simulation results for the others. 

\subsection{Diversity}
The expected diversity of a society with direct marriages is given by
\begin{equation}
\EE[dv(S_{\text{direct}})]=\frac{q\frac{(r-1)n}{2}}{p\frac{n}{2}+q\frac{(r-1)n}{2}} \cdot \frac{r}{r-1}
\label{eq:expecteddiversity}
\end{equation}

where $q(r-1)n/2$ is the expected number of potential partners of a different race to which an agents is directly connected, and $pn/2$ is the corresponding expected number of potential partners of the same race. The term $\frac{r}{r-1}$ is just the normalization we impose to ensure that diversity equals one when $p=q$. Equation (\ref{eq:expecteddiversity}) is a concave function of $q$, because 
\begin{equation}
\frac{\partial^2 \EE[dv(S_{\text{direct}})]}{\partial q^2}=\frac{-pr(r-1)}{(pn+q(r-1)n)^3} < 0
\end{equation}
and therefore a small increase in $q$ around $q=0$ produces an even larger increment in the expected diversity of a society. If we consider long marriages, we observe a more interesting change. The expected diversity in a society with long marriages is given by
\begin{equation}
\EE[dv(S_{\text{long}})]= \frac{ P(B) \frac{(r-1)n}{2} }{ P(A) \frac{n}{2} +  P(B) \frac{(r-1)n}{2} } \cdot \frac{r}{r-1}
\label{eq:expecteddiversitylong}
\end{equation}
where $P(A)$ denotes the probability that any agent (say $i$) is connected to another member of his community ($i'$) by a path of length at most 2, and $P(B)$ denotes the probability that any agent ($i$) is connected to any agent of another community ($j$) by a path of length at most two, perhaps via another agent ($h$) who does not share race neither with $i$ nor with $j$. These are given by
\begin{eqnarray}
\label{eq:pa}
P(A)&=&1-\underbrace{(1-p)}_{E(i,i')=0} \quad \underbrace{(1-p^2)^{n-2}}_{E(i,i'')=E(i'',i')=0} \quad \underbrace{(1-q^2)^{(r-1)n}}_{E(i,h)=E(h,i')=0}\\
\label{eq:pb} P(B)&=&1-\underbrace{(1-q)}_{E(i,j)=0} \quad \underbrace{(1-pq)^{2n-2}}_{E(i,i')=E(i',j)=0} \quad \underbrace{(1-q^2)^{(r-2)n}}_{E(i,h)=E(h,j)=0}
\end{eqnarray}
Plugging the values computed in equations \eqref{eq:pa} and \eqref{eq:pb} into \eqref{eq:expecteddiversitylong}, we can plot that function and observe that it grows very fast: after $q$ becomes positive, the diversity of a society quickly becomes approximately one. To understand the rapid increase in diversity, let us fix $p=1$ and let $q=1/n$. Then 
\begin{eqnarray}
P(B)&=&1-(1-q)^{2n-1}(1-q^2)^{(r-2)n}\\
&=&1-(1-q)^{rn-1}(1+q)^{(r-2)n}\\
&=&1-(1-\frac{1}{n})^{rn-1}(1+\frac{1}{n})^{(r-2)n}\\
&\underset{n \to \infty}{=} & 1- e^{-2} \quad \approx \quad 0.86
\end{eqnarray}

Substituting the value of $P(B)$ into \eqref{eq:expecteddiversitylong}, we obtain that $\EE[dv(S_{\text{long}})] \approx \frac{.86r}{.86r+.14}$, which is very close to 1 even when $r$ is small ($\EE[dv(S_{\text{long}})] \approx .92$ already for $r=2$), showing that the diversity of a society becomes 1 for very small values of $q$, in particular $q=1/n$. The intuition behind full diversity for the case of long marriages is that, once an agent obtains just one edge to any other race, he gains $\frac{n}{2}$ potential partners. Just one edge to a person of different race gives access to that person's complete race. 

Although we fixed $p=1$ to simplify the expressions of expected diversity, the rapid increase in diversity does not depend on each race having a complete graph. We also obtain a quick increase in diversity for many other values of $p$, as we discuss in Appendix \ref{app:robust}. When same-race agents are less interconnected among themselves, agents gain fewer connections once an interracial edge is created, but those fewer connections are relatively more valuable, because the agent had less potential partners available to him before.\footnote{This finding should not be confused with (and it is not implied by) two well-known properties of random graphs. The first one establishes that a giant connected component emerges in a random graph when $p=1/n$, whereas the graph becomes connected when $p=\log (n)/n$; for a review of these properties see \cite{albert2002}. The second result is that the property that a random graph has diameter 2 (maximal path length between nodes) has a sharp threshold at $p=(2\ln n/n)^{1/2}$ \citep{blum2017}. Result 1 is also similar to, but not implied by,  the {\it small world} property of simple random graphs \citep{watts1998}, where an average small path length occurs in a regular graph after rewiring a few initial edges.}

To further visualize the rapid increase in diversity we use simulations. We generate several random societies and observe how their average diversity change when they become more connected. We create ten thousand random societies, and increase the expected number of interracial edges by increasing the parameter $q$. In the simulations presented in the main text we fix $n=50$ and $p=1$.\footnote{We restrict to $n=50$ and ten thousand replications because of computational limitations. The results for other values of $p$ are similar and we describe them in Appendix \ref{app:robust}.} 

As predicted by our theoretical analysis, a small increase in the probability of interracial connections achieves perfect social integration in the case of long marriages.\footnote{Perfect social integration (diversity equals one) occurs around $q=1/n$, as we have discussed. The emergence of perfect integration is not a phase transition but rather a crossover phenomenon, i.e. diversity smoothly increases instead of discontinuously jumping at a specific point: see Figure B1 in Appendix \ref{app:robust}.}$\textcolor{blue}{^,}$\footnote{This result is particularly robust as it does not depend on our assumption that the marriages created are stable. Stability is not innocuous in our model, as we could consider other matching schemes that in fact are edge-monotonic.} For the cases with direct marriages, the increase in diversity is slower but still fast: an increase of $q$ from 0 to 0.1 increases diversity to 0.19 for $r=2$, and from 0 to 0.37 with $r=5$.\footnote{Empirical evidence strongly suggests that $q$ is very close to zero in real life. See footnote 13.} Figure \ref{fig:divplot} summarizes our main result, namely:
\begin{figure}[ht]
        \includegraphics[width=.9\textwidth]{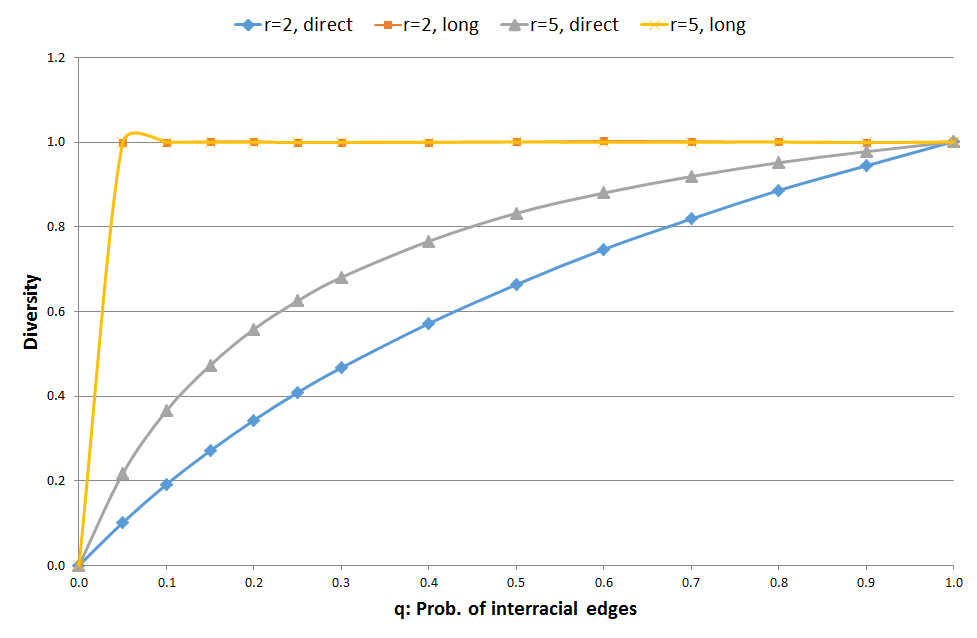}
					\label{plot:diversity}
					\caption{Average diversity of an Euclidean society for different values of $q$.}
					\begin{tablenotes}
							\item  \footnotesize The yellow and orange curves are indistinguishable in this plot because they are identical. Exact values and standard errors (which are in the order of 1.0e-04) are provided in Appendix \ref{app:tables}, as well as the corresponding graph for an assortative society, which is almost identical.
					\end{tablenotes}
					\label{fig:divplot}
\end{figure}

\begin{result}
\label{res:diversity}
Diversity is fully achieved with long marriages, even if the increase in interracial connections is arbitrarily small. 

With direct marriages, diversity is achieved partially, yet an increase in $q$ around $q=0$ yields an increase of a larger size in diversity.
\end{result}

We have showed that with either direct ($k=1$) or long ($k=2$) marriages diversity increases after the emergence of online dating, although at very different rates. An obvious question is whether online dating actually helps to create long marriages. We study the case of long marriages not because we expect that if a man meets a woman online, then that man will be able to date that woman's friends. Rather, we study it because it shows that when people meet their potential spouses via friends of friends ($k>2$), a few existing connections can quickly make a difference: recall that meeting through friends of friends is the most common way to meet a spouse both in the US and Germany (around one out of every three marriages start this way in both countries \citep{rosenfeld2012,potarca2017}).\footnote{\cite{ortega2018} finds the minimal number of interracial edges needed to guarantee that any two agents can marry for all values of $k$.} 

Our analysis shows that immediate social integration occurs for all values of $k\geq 2$. The mechanism we consider for those larger values of $k$ is that, once an interracial couple is created, it serves as a bridge between two different races. For example, if woman $a$ marries man $b$ of a different race, in the future it allows agent $a'$, an acquaintance of woman $a$, to meet agent $b'$, an acquaintance of man $b$, allowing $a'$ and $b'$ to marry. In summary, we expect that some marriages created by online dating will be between people who meet directly online, but some will be created as a consequence of those initial first marriages, and thus the increase in the diversity of societies will be somewhere in between the direct and the long marriage case.

Result \ref{res:diversity} implies that a few interracial links can lead to a significant increase in the racial integration of our societies, and leads to optimistic views on the role that dating platforms can play in modern civilizations. Our result is in sharp contrast to the one of \citet{schelling1969,schelling1971} in its seminal models of residential segregation, in which a society always becomes completely segregated. We pose this finding as the first testable hypothesis of our model.
\begin{hypo}
The number of interracial marriages increases after the popularization of online dating.
\end{hypo}

\subsection{Strength \& Size}
A second observation, less pronounced than the increase in diversity, is that strength is increasing in  $q$. We obtain this result by using simulations only, given that it seems impossible to obtain an analytical expression for the expected strength of a society.\footnote{Solving the expected average distance in a toy society with just one race, containing only one man and one woman, requires a long and complicated computation \href{https://mindyourdecisions.com/blog/2016/07/03/distance-between-two-random-points-in-a-square-sunday-puzzle/}{``Distance between two random points in a square''}, {\it Mind your Decisions}, 3/6/2016.} Figure \ref{fig:stplot} presents the average strength of the marriages obtained in ten thousand simulations with $n=50$ and $p=1$.
\begin{figure}[ht]
        \includegraphics[width=.9\textwidth]{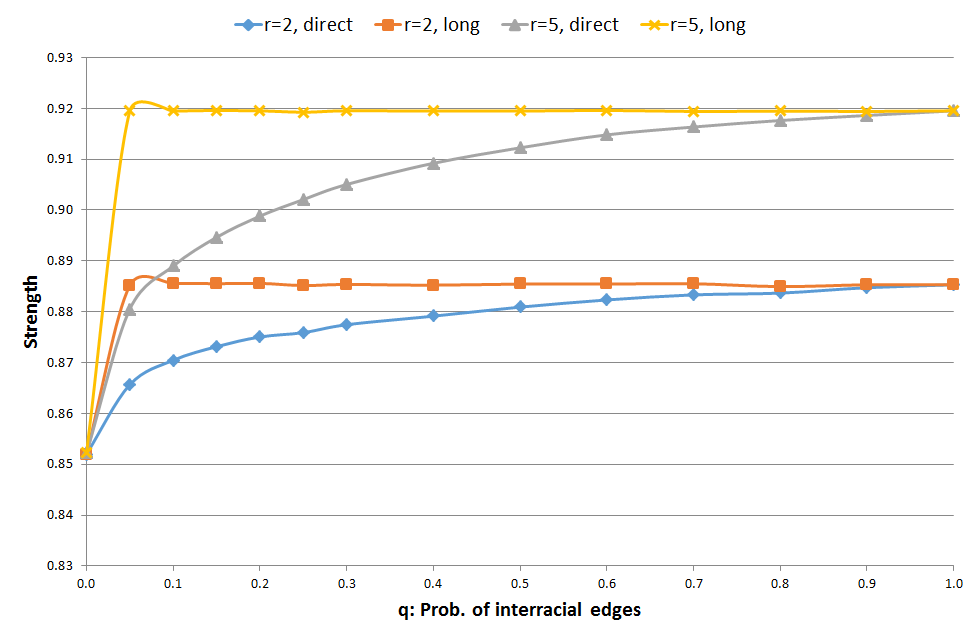}
				\caption{Average strength of an Euclidean society for different values of $q$.}
				\begin{tablenotes}
						\item \footnotesize	Exact values and standard errors (which are in the order of 1.0e-04) provided in Appendix \ref{app:tables}, as well as the corresponding graph for an assortative society, which is very similar.
					\end{tablenotes}
				\label{fig:stplot}
\end{figure}

The intuition behind this observation is that agents have more partner choices in a more connected society. Although this does not mean that every agent will marry a more desired partner, it does mean that the average agent will be paired with a better match. It is clear that, for all combinations of parameters (see Appendix \ref{app:robust} for further robustness checks), there is a consistent trend downwards in the average distance of partners after adding new interracial edges, and thus a consistent increase in the strength of the societies. We present this observation as our second result.
	
\begin{result}
\label{res:strength}
Strength increases after the number of interracial edges increases. The increase is faster with long marriages and with higher values of $r$.
\end{result}

Assuming that marriages between spouses who are further apart in terms of personality traits have a higher chance of divorcing because they are more susceptible to break up when new nodes are added to the society graph, we can reformulate the previous result as our second hypothesis.

\begin{hypo}
Marriages created in societies with online dating have a lower divorce rate.
\end{hypo}

Finally, with regards to size, we find that the number of married people also increases when $q$ increases. This observation, however, depends on $p<1$.\footnote{Using Hall's marriage theorem, \cite{erdos1964} find that in a simple random graph ($r=1$) the critical threshold for the existence of a {\it perfect matching} is $p=\log n/ n$, i.e. a marriage with size 1. Even when $p=q$, this critical threshold is only a lower bound for a society to have size 1. This is because there is no guarantee that the stable matching will in fact be a perfect one.} This increase is due to the fact that some agents do not know any available potential spouse who prefers them over other agents. Figure \ref{fig:szplot} presents the evolution of the average size of a society with $p=1/n$.
\begin{figure}[ht]
        \includegraphics[width=.9\textwidth]{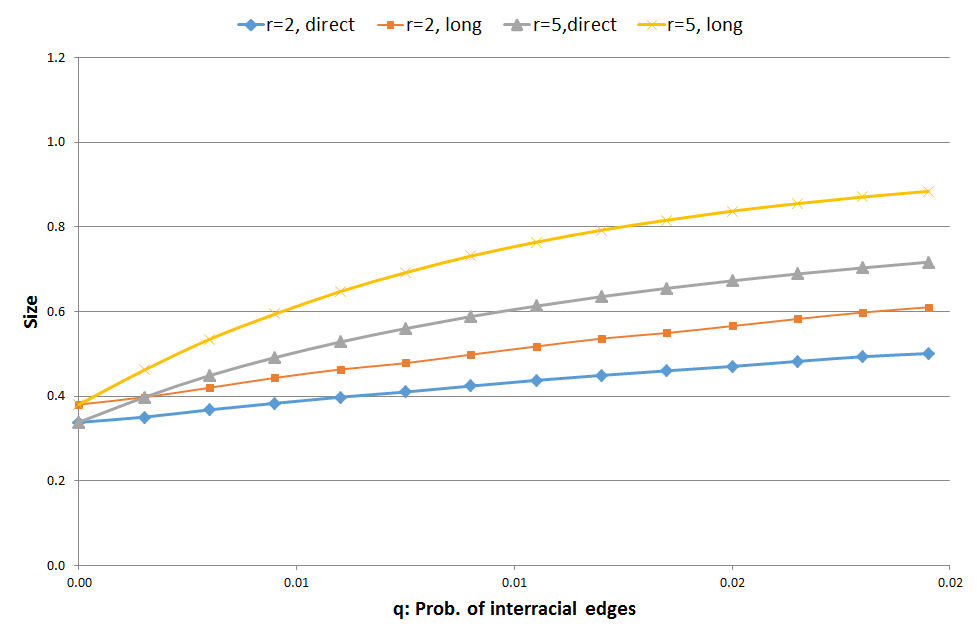}
				\caption{Average size of an Euclidean society for values of $q$ up to $p=1/n$.}
				\begin{tablenotes}
						\item \footnotesize	Exact values and standard errors (which are in the order of 1.0e-04) provided in Appendix \ref{app:tables}, as well as the corresponding graph for an assortative society, which is very similar.
					\end{tablenotes}
				\label{fig:szplot}
\end{figure}

The increase in the number of married people becomes even larger (and does not require $p<1$) whenever i) some races have more men than women, and vice versa,\footnote{See \cite{ahn2018} for empirical evidence on how gender imbalance affects cross-border marriage.} ii) agents become more picky and are only willing to marry an agent if he or she is sufficiently close to them in terms of personality traits, or iii) some agents are not searching for a relationship. All these scenarios yield the following result.
\begin{result}
\label{res:size}
Size increases after the number of interracial edges increases if either 	$p<1$, societies are unbalanced in their gender ratio, or some agents are deemed undesirable. The increase is faster with long marriages and with higher values of $r$.
\end{result}

The previous result provides us with a third and final testable hypothesis, namely:
\begin{hypo}
The number of married couples increases after the popularization of online dating.
\end{hypo}

\section{Hypotheses and Data}
\label{sec:data}
\subsection{Hypothesis 1: More Interracial Marriages}
\begin{figure}[t!]
\centering
	\includegraphics[width=.97\textwidth]{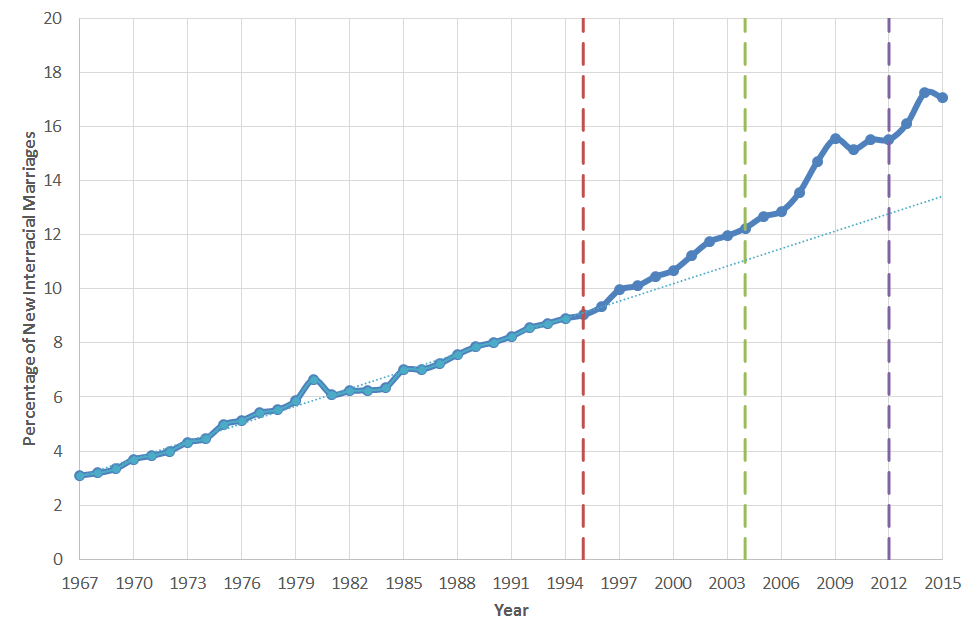}
	\caption[caption]{Percentage of interracial marriages among newlyweds in the US.}
	\label{fig:pew}
	\begin{tablenotes}[Source]
		\item \footnotesize Source: Pew Research Center analysis of 2008-2015 American Community Survey and the 1980, 1990 and 2000 decennial censuses (IPUMS). The red, green, and purple lines represent the creation of Match.com, OKCupid, and Tinder. The creation of Match.com roughly coincides with the popularization of broadband in the US and the 1996 Telecommunications Act. The blue line represents a linear prediction for 1996 -- 2015 using the data from 1967 to 1995.
\end{tablenotes}
\end{figure}	

What does the data reveal? Is our model consistent with observed demographic trends? We start with a preliminary observation before describing our empirical work in the next subsection. Figure \ref{fig:pew} presents the evolution of interracial marriages among newlyweds in the US from 1967 to 2015, based on the 2008-2015 American Community Survey and the 1980, 1990 and 2000 decennial censuses (IPUMS). In this Figure, interracial marriages include those between White, Black, Hispanic, Asian, American Indian or multiracial persons.\footnote{We are grateful to Gretchen Livingston from the Pew Research Center for providing us with the data. Data prior to 1980 are estimates. The methodology on how the data was collected is described in \cite{pew2017}.}

We observe that the number of interracial marriages has consistently increased in the last 50 years. However, it is intriguing that a few years  after the introduction of the first dating websites in 1995, like Match.com, the percentage of new marriages created by interracial couples increased. The increase becomes steeper around 2006, a couple of years after online dating became more popular: it is around this time when well-known platforms such as OKCupid emerged. During the 2000s, the percentage of new marriages that are interracial rose from 10.68\% to 15.54\%, a huge increase of nearly 5 percentage points, or 50\%. After the 2009 increase, the proportion of new interracial marriage jumps again in 2014 to 17.24\%, remaining above 17\% in 2015 too. Again, it is interesting that this increase occurs shortly after the creation of Tinder, considered the most popular online dating app.\footnote{Tinder, created in 2012, has approximately 50 million users and produces more than 12 million matches per day. See \href{https://www.nytimes.com/2014/10/30/fashion/tinder-the-fast-growing-dating-app-taps-an-age-old-truth.html}{``Tinder, the fast-growing dating app, taps an age-old truth''}, {\it New York Times}, 29/10/2014. The company claims that 36\% of Facebook users have had an account on their platform.} 

The increase in the share of new marriages that are interracial could be caused by the fact that the US population is in fact more interracial now than 20 years ago. However, the change in the population composition of the US cannot explain the huge increase in intermarriage that we observe, as we discuss in detail in Appendix \ref{app:population}. A simple way to observe this is to look at the growth of interracial marriages for Black Americans. Black Americans are the racial group whose rate of interracial marriage has increased the most, going from 5\% in 1980 to 18\% in 2015. However, the fraction of the US population that is Black has remained constant at around 12\% of the population during the last 40 years. Random marriage accounting for population change would then predict that the rate of interracial marriages would remain roughly constant, although in reality it has more than tripled in the last 35 years.

The correlation between the increase in the number of interracial marriages and the emergence of online dating is suggestive, but the rise of interracial marriage may be due to many other factors, or a combination of those. To precisely pin down the effect of online dating in this increase, we proceed as follows. 

\subsection{Empirical Test of Hypothesis 1}
We use the following strategy in order to rigorously test our prediction that online dating increases the number of interracial marriages. Our empirical setup exploits state variations in the development of broadband internet from 2000 to 2016, which we use as a proxy for online dating. There is little concern for reverse causality, which would imply that broadband developed faster in states where there was a higher number of interracial couples. Our dependent variable is a dummy showing whether a person's marriage is interracial. We use a variety of personal and state-level covariates in order to identify the relationship between online dating and interracial marriages as precisely as possible. Figure \ref{fig:scatter} displays a preview of the relationship between broadband development and interracial marriage by state.

\begin{figure}[ht!]
	\centering
	\includegraphics[width=.85\textwidth]{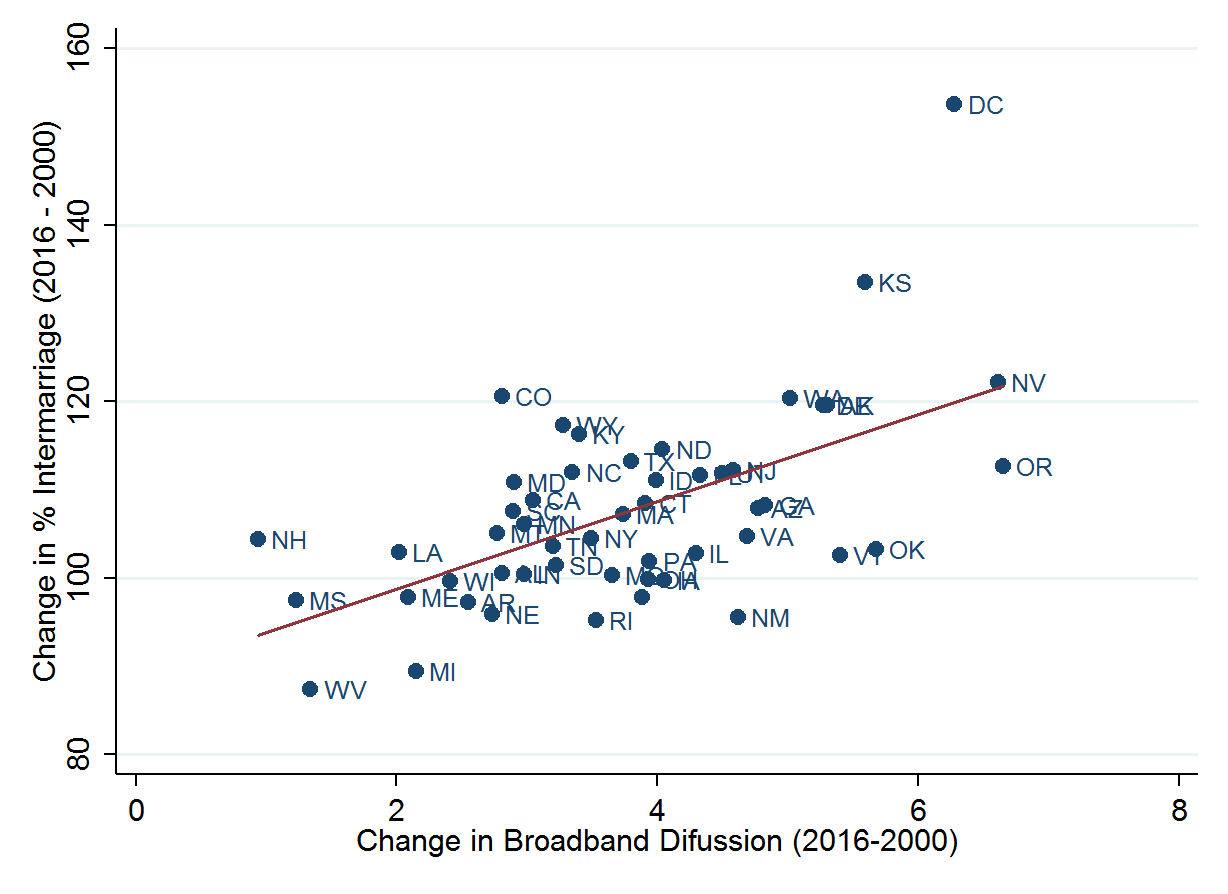}
	\caption{Change in \% of marriages that are interracial in US, by state.}
	\label{fig:scatter}
		\begin{tablenotes}[Source] \centering
				\item \footnotesize Source: FCC statistical reports on broadband development, US Census population estimates, and the American Community Survey (IPUMS) from 2000 to 2016.
					\end{tablenotes}
\end{figure}	

We use three main data sources for our analysis. All data concerning individuals is downloaded from IPUMS, and we restrict our analysis to married individuals only. Although the data is only on the individual level, it is possible to construct marriage relationships, by employing a matching procedure described at IPUMS website. As additional control variables, we employ education, age, and total income,\footnote{One might worry about endogeneity coming from income, as the marriage decision might affect earnings. Excluding income as explanatory variable leaves the coefficients virtually unchanged. As additional robustness check, we estimate a similar model at state level in Appendix \ref{app:estimation}.} as these are likely to affect the marriage decision. 

We construct the broadband data using information from reports by the Federal Communications Commission (FCC), which is the regulatory authority in the United States responsible for communication technology. Following \cite{bellou2015},\footnote{She uses a similar specification to examine the role of internet diffusion in the creation of new marriages. Our dataset is described in detail in Appendix \ref{app:estimation}.} we use the number of residential high-speed internet lines per 100 people as our explanatory variable. Data is available for the years 2000 to 2016. However, we have to discard Hawaii from our analysis, as observations are missing up to 2005.

We download additional state controls from the Current Population Survey. Following Bellou's work, we include variables like the ratio of the male divided by female population within a state, age bins and the ratio of non-white people in a state. This last explanatory variable is especially important in our context of interracial marriages. 

We estimate the following reduced form equation by a linear probability model: 
\begin{equation}
\label{eq:emp_model}
\text{Inter}_{ist} = \alpha + \beta \, \text{Broadband}_{st} + \gamma_1 X_{ist} + \gamma_2 Z_{st} + FE_s + FE_t + \epsilon_{ist}
\end{equation}

where $\text{Inter}_{ist}$ is one if a person is in an interracial marriage and 0 otherwise. The indices relate to person $i$, living in state $s$ at time $t$. We are mostly interested in the coefficient $\beta$, as it captures the propensity of online dating. The values in $X$ are covariates relating directly to the person, while $Z$ represents time varying state variables. We additionally include state- and year fixed effects, and cluster the standard errors $\epsilon_{ist}$ at the state-year level. Our rich battery of control variables enables us to clearly identify the relationship between interracial marriages and broadband internet, which can be seen as an instrument for online dating. As marriages take a while to form, we include the broadband variable with a 3 year lag based on empirical evidence \citep{rosenfeld2017}.\footnote{In Appendix \ref{app:estimation} we follow a different strategy. We construct shares of interracial marriages per state and year and estimate this with panel methods. The advantage is that the dependent variable is continuous rather than dichotomous, however we cannot use individual controls and introduce standard errors via aggregation. These standard errors should be negligible given the amount of observations we have available. The state-year level specification also generates significant coefficients with the expected signs, confirming our results. }

\begin{table}[ht]
\scalebox{0.9}{
\def\sym#1{\ifmmode^{#1}\else\(^{#1}\)\fi}
\begin{tabular}{l*{5}{c}}
\hline\hline
        
            &\multicolumn{5}{c}{Interracial Marriage}\\
						&\multicolumn{1}{c}{(1)}&\multicolumn{1}{c}{(2)}&\multicolumn{1}{c}{(3)}&\multicolumn{1}{c}{(4)}&\multicolumn{1}{c}{(5)}\\
\hline
Broadband (-3)  &  .00071\sym{***}&    .00058\sym{***}&    .00065\sym{***}&    .00053\sym{***}&                     \\
            & {\small (.000057)}         & {\small(.000066)}         & {\small (.000056)}         & {\small(.000065)}         &                     \\
[1em]
Broadband   &                     &                     &                     &                     &    .00021\sym{***}\\
            &                     &                     &                     &                     & {\small (0.000059)}         \\
[1em]
Age   &                     &                     &    -.0033\sym{***}&    -.0033\sym{***}&    -.0033\sym{***}\\
            &                     &                     & {\small (.000032)}         & {\small (.00032)}         & {\small (.000032)}         \\
[1em]
Education        &                     &                     &    -.0024\sym{***}&    -.0024\sym{***}&    -.0024\sym{***}\\
            &                     &                     &  {\small (.00023)}         &  {\small (.00023)}         &  {\small (.00023)}         \\
[1em]
Log Income      &                     &                     &   -.0056\sym{***}&   -.0056\sym{***}&   -.0056\sym{***}\\
            &                     &                     &  {\small (.00015)}         &  {\small (.00015)}         &  {\small (.00015)}         \\
[1em]
State controls & & x & & x & x \\
\hline
\(N\)       &    17,284,584         &    17,284,584         &    17,284,584         &    17,284,584         &    17,284,584         \\
Adj. \(R^{2}\)&       0.021         &       0.021         &       0.048         &       0.048         &       0.048         \\
\hline\hline
\multicolumn{5}{l}{\footnotesize Standard errors are in parentheses and clustered at state-year level.}\\
\multicolumn{5}{l}{\footnotesize All regressions include state and year dummies.}\\
\multicolumn{5}{l}{\footnotesize \sym{*} \(p<0.05\), \sym{**} \(p<0.01\), \sym{***} \(p<0.001\)}\\
\end{tabular}}
\caption{Effect of broadband diffusion on interracial marriage.}
\label{tab:estim}
\end{table}

The first column in Table \ref{tab:estim} states that one additional line of broadband internet 3 years ago affects the probability of being in an interracial marriage by 0.07\%. The coefficient is positive, as predicted by our theoretical model. In column (2) we include controls at the state level and find that the relationship between interracial marriages and broadband remains significantly positive. This continues to be true when including the individual covariates, all of which decrease the probability of a marriage being interracial. Perhaps surprisingly, education enters negatively. A potential underlying reason might be that education leads to more segregated friendship circles, a conjecture worth being explored in subsequent work. 

Column (4) is now the specification outlined in \eqref{eq:emp_model}. Even with all controls, the effect of broadband penetration on interracial marriages is highly significant and positive. This result suggests a causal relationship in the sense described by our model. As additional evidence for this claim, we see that once we replace the lagged broadband with its contemporaneous counterpart, the coefficient declines in size, which means that the state of broadband 3 years ago has a bigger effect on interracial marriages as compared to broadband today. This is because it takes time for marriages to form. 

Overall, the work we have presented here, jointly with robustness checks described in Appendix \ref{app:estimation}, suggests that there is empirical support for our hypothesis of online dating leading to more interracial marriages. 

Furthermore, the work of \cite{thomas2018}, released shortly after we made the first version of our paper available online, has provided further evidence of the role of online dating in the creation of new interracial marriages. Using a self-collected dataset representative of the US population (known as ``how couples met and stayed together'' or HCMST), Thomas finds that couples who met online were more likely to be interracial, even after controlling for the racial composition of their locations and confounding factors. In particular, after analyzing information about 3,036 American couples, he finds that couples who met online since 1996 are 6 to 7 percent more often interracial than couples who met purely offline. His finding, using different methods and data, is similar to ours and provides further support for Hypothesis 1. His dataset is freely available online for replication purposes.

\subsection{Hypothesis 2 \& 3: More and Better Marriages}
With regards to Hypothesis 2 and 3, which establish the creation of better and more marriages, respectively, we do not provide novel empirical work but we survey existing research from different disciplines.

There are two articles which have focused on whether relationships created online last longer than those created elsewhere. The first one is \cite{cacioppo2013}. They find that marriages created online were less likely to break up and exhibited a higher marital satisfaction, using a sample of 19,131 Americans who married between 2005 and 2012. They write: {\it ``Meeting a spouse on-line is on average associated with slightly higher marital satisfaction and lower rates of marital break-up than meeting a spouse through traditional off-line venues"}. The second one is \cite{rosenfeld2017}. Analyzing the HCMST dataset from 2009 to 2015, he finds no difference in the duration of marriages that start online and offline. Besides their methodological differences, what it is clear is that both papers find that marriages created online last at least as long as those created elsewhere, disproving the common popular belief that online relationships are only casual and of lower quality (see footnote 15). This finding aligns with Hypothesis 2 of our model.

With regards to Hypothesis 3, which states that the advent of online dating leads to a higher number of marriages, there is in fact empirical evidence supporting it. \cite{bellou2015} examines the role that internet penetration (in the form of broadband deployment) has had in the number of White and Black young adults who decide to marry. She uses data from the Current Population Survey and the FCC from 2000 to 2010. She finds that wider internet availability has indeed {\it caused} more interracial marriages among people between 21 and 30 years old. In particular, she finds that marriage rates are currently higher by 13\% to 33\% from what they might have been if the internet had not been available, despite a pre-existing downward trend in the propensity to marry among young adults.

\section{Final Remarks}
\label{sec:conclusion}

\subsection{Limitations of our Model}
Our model does not explain three observed characteristics of interracial marriages. First, it does not explain why interracial marriages are more likely to end up in divorce \citep{bratter2008,zhang2009}. Second, it does not explain why some intraracial marriages from a particular race last longer than intraracial marriages from another race (e.g. \citealp{stevenson2007} document that Blacks who divorce spend more time in their marriage than their White counterparts). And third, our model does not explain why interracial marriage between specific combinations of race and gender are more common than others (marriage between White men and Asian women is much more common than marriage among Asian men and White women; similarly marriage between Black men and White women is much more common than marriage between Black women and White men, see \citealp{chiappori2016black}.). A theoretical model that can account for all those stylized facts is still missing (see \citealp{fryer2007} for a discussion of how well existing models of marriage explains observed interracial marriage trends).

\subsection{Further Applications}
The theoretical model we present discusses a general matching problem under network constraints, and hence it can be useful to study other social phenomena besides interracial marriage. Furthermore, the role of connecting highly clustered groups is also not only linked to online dating. Another example is the European student exchange program ``Erasmus'', which helped more than 3 million students and over 350 thousand academics and staff members to spend time at a University abroad.\footnote{\href{http://ec.europa.eu/dgs/education_culture/repository/education/library/statistics/ay-12-13/facts-figures_en.pdf}{``ERASMUS: Facts, figures and trends.''}, {\it European Commission}, 10/6/2014. Interestingly, \cite{parey2011} find that participating in ERASMUS increases the probability of working abroad by 15 percentage points. Their data suggests that a large fraction of this effect comes from marrying a foreign partner.} Although it would be interesting to test our model in these and other scenarios, we leave this task for further research.

\subsection{Conclusion}
We introduce a theoretical model to analyze the complex process of deciding whom to marry in the times of online dating. Our model is admittedly simple and fails to capture many of the complex features of romance in social networks, like love. However, in our view, the simplicity of our model is its main strength. It generates strong predictions with a simple structure. The main one is that the diversity of societies, measured by the number of interracial marriages in it, increases after the introduction of online dating. Not only is this prediction consistent with demographic trends, but an empirical analysis of interracial marriages within each US state suggests that online dating is indeed partially responsible for the observed increase in interracial marriage. And if that is the case, in words of the \cite{mit}: {\it ``the model implies that this change is ongoing. That’s a profound revelation. These changes are set to continue, and to benefit society as result''}.

Simple models are great tools for conveying an idea. Schelling's segregation model clearly does not capture many important components of how people decide where to live. It could have been enhanced by introducing thousands of parameters. Yet, it has broadened our understanding of racial segregation, and has been widely influential: according to Google Scholar, it has been quoted 3,258 times by articles in a variety of field ranging from sociology to mathematics. It has provided us with a way to think about an ubiquitous phenomenon.

Our model is a modest attempt that goes in the same direction.

\newpage
\bibliographystyle{ecta}

\newpage 

\appendix
\begin{center}
	Appendices (for online publication only).
\end{center}

\section{Appendix A: Simulation Results}
\label{app:tables}

\section{Appendix B: Robustness Checks}
\label{app:robust}

\section{Appendix C: Interracial Marriages and Population Composition}
\label{app:population}

\section{Appendix D: Regression Analysis}
\label{app:estimation}
\end{document}